\documentclass[journal,10pt]{IEEEtran}

%
\usepackage{cite}
\usepackage{color} 
\usepackage{stfloats}

\makeatletter
\renewcommand\normalsize{%
	\@setfontsize\normalsize\@xpt\@xiipt
	\abovedisplayskip 1.2\p@ \@plus1\p@ \@minus5\p@
	\abovedisplayshortskip \z@ \@plus1\p@
	\belowdisplayshortskip 4\p@ \@plus2\p@ \@minus2\p@
	\belowdisplayskip \abovedisplayskip
	\let\@listi\@listI}
\makeatother


%
\ifCLASSINFOpdf
   \usepackage[pdftex]{graphicx}
   
\else
\fi
%
%

%
%

%
\usepackage{cite}
\usepackage{amsmath,amsthm,amssymb,graphicx,algorithm,algorithmic}
\usepackage{makecell}
\usepackage{subfigure}
\usepackage{float}
\newtheorem{theorem}{\bf Theorem}


%
\usepackage{color}   
\usepackage{array}
\usepackage{multicol}
\hyphenation{op-tical net-works semi-conduc-tor}

\begin{document}

\title{Reconfigurable Intelligent Surface Assisted High-Speed Train Communications: Coverage Performance Analysis and Placement Optimization}

\author{

Changzhu~Liu,~\IEEEmembership{Graduate~Student~Member,~IEEE},
Ruisi~He,~\IEEEmembership{Senior~Member,~IEEE}, \\
Yong Niu,~\IEEEmembership{Senior Member,~IEEE},
Zhu~Han,~\IEEEmembership{Fellow,~IEEE}, 
Bo~Ai,~\IEEEmembership{Fellow,~IEEE}, 
Meilin~Gao,~\IEEEmembership{Member,~IEEE}, \\ 
Zhangfeng~Ma,~\IEEEmembership{Member,~IEEE}, 
Gongpu~Wang,~\IEEEmembership{Member,~IEEE},
and Zhangdui~Zhong,~\IEEEmembership{Fellow,~IEEE}

\thanks{
Copyright (c) 2015 IEEE. Personal use of this material is permitted. However, permission to use this material for any other purposes must be obtained from the IEEE by sending a request to pubs-permissions@ieee.org.
This work is supported by National Key R\&D Program of China under Grant 2020YFB1806903, the National Natural Science Foundation of China under Grant 62271037 and 62001519,  the Fundamental Research Funds for the Central Universities under Grant 2022JBQY004, the State Key Laboratory of Advanced Rail Autonomous Operation under Grant RCS2022ZZ004, the Natural Science Foundation of Hunan Province under Grant 2023JJ40607, the Beijing Engineering Research Center of High-Speed Railway Broadband Mobile Communications under Grant BHRC-2022-3, the NSF CNS-2107216, CNS-2128368, CMMI-2222810, ECCS-2302469, US Department of Transportation, Toyota and Amazon. Part of this work  has been presented at the IEEE Wireless Communications and Networking Conference (WCNC) 2023. \emph{(Corresponding author:Ruisi He; Yong Niu.) }

C. Liu, R. He, and Z. Zhong are with the State Key Laboratory of Advanced Rail Autonomous Operation, the School of Electronics and Information Engineering, the Frontiers Science Center for Smart High-speed Railway System, the Beijing Engineering Research Center of High-Speed Railway Broadband Mobile Communications, Beijing Jiaotong University, Beijing 100044, China (e-mails: changzhu{\_}liu@bjtu.edu.cn; ruisi.he@bjtu.edu.cn; zhdzhong@bjtu.edu.cn).

Y. Niu is with the State Key Laboratory of Advanced Rail Autonomous Operation, Beijing Jiaotong University, Beijing 100044, China, and also with the National Mobile Communications Research Laboratory, Southeast University, Nanjing 211189, China (e-mail: niuy11@163.com).

Z. Han is with the Department of Electrical and Computer Engineering at the University of Houston, Houston, TX 77004 USA, and also with the Department of Computer Science and Engineering, Kyung Hee University, Seoul, South Korea, 446-701 (e-mail: hanzhu22@gmail.com).

B. Ai is with the State Key Laboratory of Advanced Rail Autonomous Operation, Beijing Jiaotong University, Beijing 100044, China, and also with Henan Joint International Research Laboratory of Intelligent Networking and Data Analysis, Zhengzhou University, Zhengzhou 450001, China, and also with Research Center of Networks and Communications, Peng Cheng Laboratory, Shenzhen, China (e-mail: boai@bjtu.edu.cn).

M. Gao is with the Tsinghua Space Center and the Beijing National Research Center for Information Science and Technology, Tsinghua University, Beijing 100091, China (email: gaomeilin@tsinghua.edu.cn).

Z. Ma is with the College of Information Engineering, Shaoyang University, Shaoyang 422000, China (e-mail: zhangfeng.ma@vip.126.com)

G. Wang is with the Beijing Key Lab of Transportation Data Analysis and Mining, School of Computer and Information Technology, Beijing Jiaotong University, Beijing 100044, China (e-mail: gpwang@bjtu.edu.cn).

              }
\thanks{
              }}

\markboth{IEEE Transactions on Vehicular Technology,~Vol.~XX, No.~XX, XXX~2022}
{}

\maketitle

\begin{abstract} %
Reconfigurable intelligent surface (RIS) emerges as an efficient and promising technology for the next wireless generation networks and has attracted a lot of attention owing to the capability of extending wireless coverage by reflecting signals toward targeted receivers. In this paper, we consider a RIS-assisted high-speed train (HST) communication system to enhance wireless coverage  and improve coverage probability. First, coverage performance of the downlink single-input-single-output system is investigated, and the closed-form expression of coverage probability is derived. Moreover, travel distance maximization problem is formulated to facilitate RIS discrete phase design and RIS placement optimization, which is subject to coverage probability constraint. Simulation results validate that better coverage performance and higher travel distance can be achieved with deployment of RIS. The impacts of some key system parameters including transmission power, signal-to-noise ratio threshold, number of RIS elements, number of RIS quantization bits, horizontal distance between base station and RIS, and speed of HST on system performance are investigated. In addition, it is found that RIS can well improve coverage probability with limited power consumption for HST communications.
\end{abstract}
\begin{IEEEkeywords}
Reconfigurable intelligent surface (RIS), high-speed train communication, coverage probability, travel distance.
\end{IEEEkeywords}

%
\IEEEpeerreviewmaketitle
\vspace{-4mm}
\section{Introduction}
\IEEEPARstart{H}{igh-speed} train (HST) communications have attracted a lot of attention in recent years, and it has tended to evolve from informatization to intelligence with integration of mobile communication and artificial intelligence technologies\cite{re2,re1,Huang1,Huang2,5GR}. The future HST communications will support more intelligent applications and this puts forward higher requirements for system performance. Compared with the traditional wireless communication networks, HST communications have high mobility of onboard transceivers and large signal penetration loss through train cars, which lead to many challenges, such as channel modeling, coverage enhancement, Doppler shift compensation, time-varying channel estimation, beamforming design, and resource management \cite{re3,He,ch1,ch2,chen}. Some key technologies of the fifth-generation (5G) and the sixth generation (6G) communication systems, such as massive multi-input-multi-output (MIMO), millimeter-wave (mmWave), renon-orthogonal multiple access, unmanned aerial vehicle (UAV), and reconfigurable intelligent surface (RIS) will gradually be integrated with HST communications to improve system performance \cite{6g1,ma1,ma2,ma3,re7}.

HST communications are vast different from public terrestrial cellular communication, especially in terms of reliability and safety \cite{re21}. In the design of HST communication systems, cell coverage performance is an important indicator, which is critical to guarantee reliable data transmission. Up to now, there are only few literatures working on cell coverage and improvement for HST communications.  In \cite{re30}, percentage of cell coverage area and edge outage probability were evaluated. In order to achieve seamless coverage for environment-diverse HST, a space-ground integrated cloud railway network was proposed in \cite{re8}, which can reduce handover time and improve coverage. In \cite{re31} and \cite{re32}, a beamforming based coverage performance improvement scheme was proposed for HST communication systems with elliptical cells. In \cite{re33} and \cite{re34}, by considering overlap area between adjacent cells and  hard handoff scheme, coverage performance of HST system was analyzed. In order to reduce impact of handoff, \cite{re36} proposed two different free-space-optics coverage models to improve coverage performance. In \cite{re37} and \cite{re38}, UAV was introduced to assisted HST communications for coverage improvement. In \cite{re39}, coverage analysis and performance optimization for HST communication systems with carrier aggregation were investigated, and theoretical expressions for edge coverage probability and percentage of cell coverage area were derived. The cell-free massive MIMO HST communication system with multiple  antenna points and multiple users was investigated in \cite{re399}, and  deterministic signal-to-interference-plus-noise ratio expression and  tight upper bound on coverage probability were derived. Among them, coverage performance can be improved, however, energy consumption, manufacturing and deployment cost of base station (BS) cannot be effectively reduced. On the other hand, HST wireless propagation environment is not intelligently controlled, and the range of establishing effective communication between BS and HST is still limited, which further limits the improvement effect of the coverage performance. 

RIS is one of the promising technologies for future wireless communication. It is capable of smartly designing radio environment, and has the characteristics of low cost, low complexity, and easy deployment \cite{ref2,ref3,sun,RIS1,RIS2,RIS3}, which can be used in future high-mobility communications (such as vehicular communications, UAV communications, HST communications et al.). In recent years, there has been some works on the investigation of RIS-assisted high-mobility communications,  which are in the areas of channel estimation \cite{ce1,ce2}, transmission protocol \cite{tp1,tp2}, Doppler effect mitigation \cite{dem1,dem2,dem3,dem4} and beamforming \cite{beam1}. In addition, investigation of RIS-assisted HST communications is still at its infancy. In \cite{re3}, the authors provided a RIS-aided HST wireless communication paradigm, including its main challenges and application scenarios, and provided solution to signal processing and resource management. In \cite{re55} and \cite{re552}, the authors considered RIS-assisted MIMO downlink system for HST communications, where outage probability with statistic channel state information (CSI) was investigated. Multiple RISs assisted HST communication system was investigated in \cite{re56}, and the authors considered a train-ground time division duplexing communication paradigm to deploy two RISs, and further solved spectrum effective maximization problem.The RIS-assisted mmWave HST communications were studied in \cite{re57}, and  spectral efficiency maximization problem was formulated, and performance was evaluated through deep strengthening learning. In \cite{re58}, the authors considered RIS-assisted free-space-optics communications for HST access connectivity, where average signal-to-noise ratio (SNR) and outage probability were analyzed. In \cite{re59} and \cite{re60}, the authors investigated interference suppression for RIS-assisted railway communication systems. However, wireless coverage performance of HST communications is not investigated among the existing works.

Wireless coverage in complex environment can be significantly enhanced with the aid of RISs \cite{re40,re41}, and it is thus helpful for HST coverage enhancement. For coverage analysis of RIS-assisted communications, there are some works in recent years \cite{re43,re44,re45,re47,re48,re54,re51,re46,noma4}. In \cite{re43}, a RIS-assisted network model was investigated, and a RIS placement optimization problem was formulated to maximize cell coverage. In \cite{re44}, the authors investigated a RIS-assisted point-to-point network without direct link, and analyzed network coverage, SNR gain, and delay outage rate. A RIS-assisted single-input single-output (SISO) system was considered in \cite{re45}, where coverage probability and ergodic capacity were analyzed. A RIS-assisted terahertz wireless systems was considered in \cite{re47}, and a novel exact closed-form expression of coverage probability was derived. In \cite{re48}, the authors investigated coverage performance of RIS-assisted wireless networks without direct link for the Nakagami-$m$ channel. A double-RIS assisted communication system was considered in \cite{re54}, and simulation results validated the enhanced performance over single-RIS counterpart. In \cite{re51}, a RIS-assisted SISO system was considered and the closed-form expression for coverage probability was derived. In \cite{re46}, the authors investigated RIS-assisted mmWave cellular networks, where peak reflection power expression and downlink signal-to-interference ratio coverage probability expression were derived. RIS-aided multi-cell networks were studied in \cite{noma4}, and the expression for coverage probability and ergodic rate were derived and performance was analyzed. Although RIS can be used to enhance coverage of wireless communications, however, impact of RIS placement on coverage performance has not been well investigated among \cite{re44,re45,re46,re47,re48,noma4,re51,re54}. There are some existing works that concentrate on optimizing a single RIS placement. In \cite{place1}, the authors investigated spectrum sharing scenarios between radars and wireless communication systems, and via jointly optimizing the placement of BS and RIS to maximize probability of coverage of RIS-assisted communications. The placement of a single RIS was analyzed in three-dimensional environments in \cite{place2}. In \cite{place4}, the authors investigated optimal placement of one RIS given fixed locations of one transmitter and one receiver in a mmWave link, in order to maximize end-to-end SNR. The placement optimization of multi-RISs was investigated in \cite{place5,place6,place7}. In more detail, in \cite{place5}, a minimization problem of number of RISs was formulated by jointly optimizing  number, locations, and phase shift coefficients of RISs. In \cite{place6} and \cite{place7}, RIS deployment problem was studied for RIS-aided downlink multi-user communication systems, and  performance gain was increased by optimizing RIS locations. In addition, to the authors' best knowledge, investigation of RIS-assisted HST communication coverage is still missing.

Motivated by the above gap, coverage performance analysis and RIS placement optimization for RIS-assisted SISO downlink HST communications are investigated in this paper. In our previous work of \cite{mywork}, we briefly introduced the idea of using RIS to enhance coverage probability, and compared coverage probability with the case without RIS. In this paper, we extend our previous work and consider impact of RIS placement on HST communication  performance. Another three schemes are considered for comparison, i.e., ideal phase, random phase and without RIS. The major contributions can be summarized as follows: 
\begin{itemize}
\item We consider a novel RIS-assisted SISO downlink HST communication system to extend coverage of HST, where RIS is deployed to provide reflective paths to enhance received power at mobile relay (MR).
\item The closed-form expression of coverage probability is derived according to CSI for RIS-assisted HST communications. Then, travel distance maximization problem is formulated subject to coverage probability constraint. Based on the alternative optimization method, a joint RIS discrete phase design and RIS placement optimization algorithm is proposed.
\item Simulations are presented which demonstrates the impact of some key system parameters on coverage probability, travel distance, and transmission rate of HST communications, such as transmission power, SNR threshold, number of RIS elements, number of RIS quantization bits, horizontal distance between BS and RIS, and speed of HST. Different schemes are compared and it is demonstrated that the deployment of RIS can effectively improve coverage performance of HST communication systems.
\end{itemize}

The remainder of this paper is organized as follows. System model is described in Section II. In Section III, coverage probability of RIS-assisted HST communication systems are derived and travel distance maximization problem is formulated. The solution of travel distance maximization problem is presented in Section IV. In Section V, numerical results are presented to show the impacts of some key system parameters on coverage performance. Finally, conclusions are given in Section VI.
\vspace{-3mm}
\section{System Modle}
\subsection{Scenario Description}
A RIS-assisted HST communication system model is proposed in this section, as shown in F{}ig.~\ref{fig:1a}. We consider a single-antenna BS serves a single-antenna MR with the help of RIS. Trackside BS transmits signals to a train which is bypassing the coverage cell, and a MR is mounted on top of the train to avoiding penetration loss. RISs are deployed a position close to the track within cell coverage. With the help of RIS, a reflective channel between BS and MR is established to improve signal quality. Therefore, for RIS-assisted HST communications, MR can not only receive the direct signals from BS, but also receive the reflected signals via RIS.

For practical application, the time of RIS phase shift operation $T_{\rm{RIS}}$  depends on the electromagnetic pulse according to the regulation voltage, which is in orders of ns \cite{new1}. To measure the time-variant effect in HST communications, the stationarity time is introduced during which time the channel keeps constant or has no great change. $T_{\rm{slot}}$  is in orders of milliseconds in HST mmWave communications \cite{new2}. Thus, it is valid and practical for the almost real-time RIS operation in time-varying HST communications

As show in \cite{t11}, there are two cases for path-loss model in RIS-aided communication, i.e., near field and far field. Specifically, when distance between the BS/MR and the center of RIS is less than $\frac{2L^2}{\lambda}$, where $L$ and $\lambda$ represent the largest size of RIS and the wave length of signal, respectively, BS/MR are considered to be in the near field of the RIS. Otherwise, they are classified as the far field of RIS. In this paper, we consider the far field case\footnote{In this paper, we consider from vertical BS-to-rail distance $d_{\rm{BS}}^{\rm{v}}=50$ m and vertical  from RIS-to-rail  distance $d_{\rm{RIS}}^{\rm{v}}=20$ m, as shown in F{}ig.~\ref{fig:1b}. In other words, vertical distance from BS to the center of RIS and vertical distance from the center of RIS to MR are approximately equal $30$ m and $20$ m, respectively, and we can calculate $L$ approximately equal $1.13$ m or $1.38$ m according to $\frac{2L^2}{\lambda} = 20$ m or $30$ m when carrier frequency $f=2.35$ GHz. Let $U$ denote the side of a square RIS element. When $L=1.13$ m, if $U=\frac{\lambda}{10}$ or $U=\frac{\lambda}{2}$, $N$ can exceed $7800$ or $300$, respectively, while $L=1.38$ m, $N$ can exceed $11000$ of $400$, respectively. However, we consider smaller number of RIS elements in this paper, namely, distance between BS/MR and the center of RIS is larger than $\frac{2L^2}{\lambda}$. Thus, BS/MR are classified as in the far field of the RIS \cite{t11}.}.

\begin{figure*}[!t]
  \centering 
  \subfigure[]
  {
    {\includegraphics[scale=0.085]{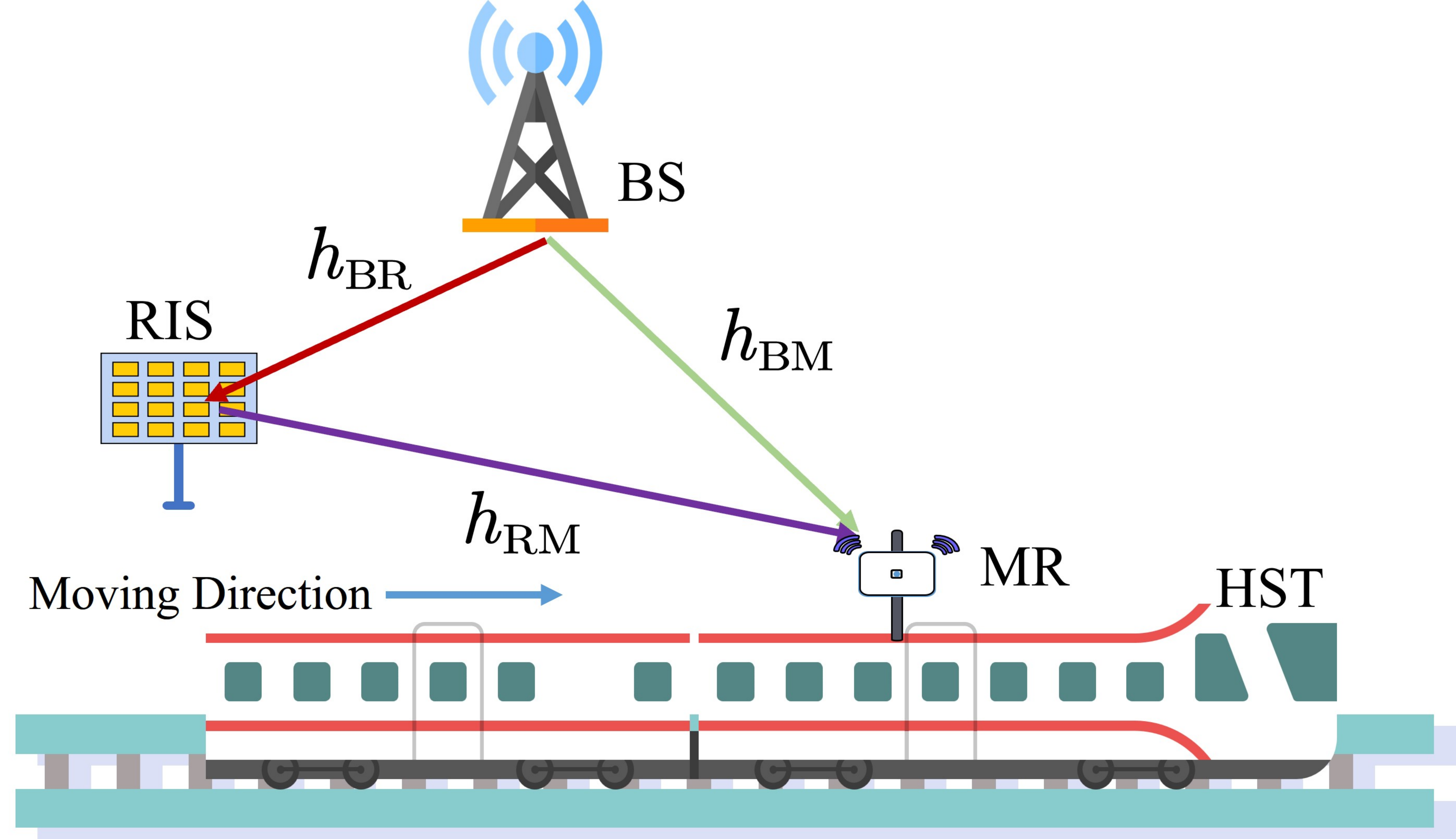}\label{fig:1a}} 
  }
  \subfigure[]
  {
    {\includegraphics[scale=0.18]{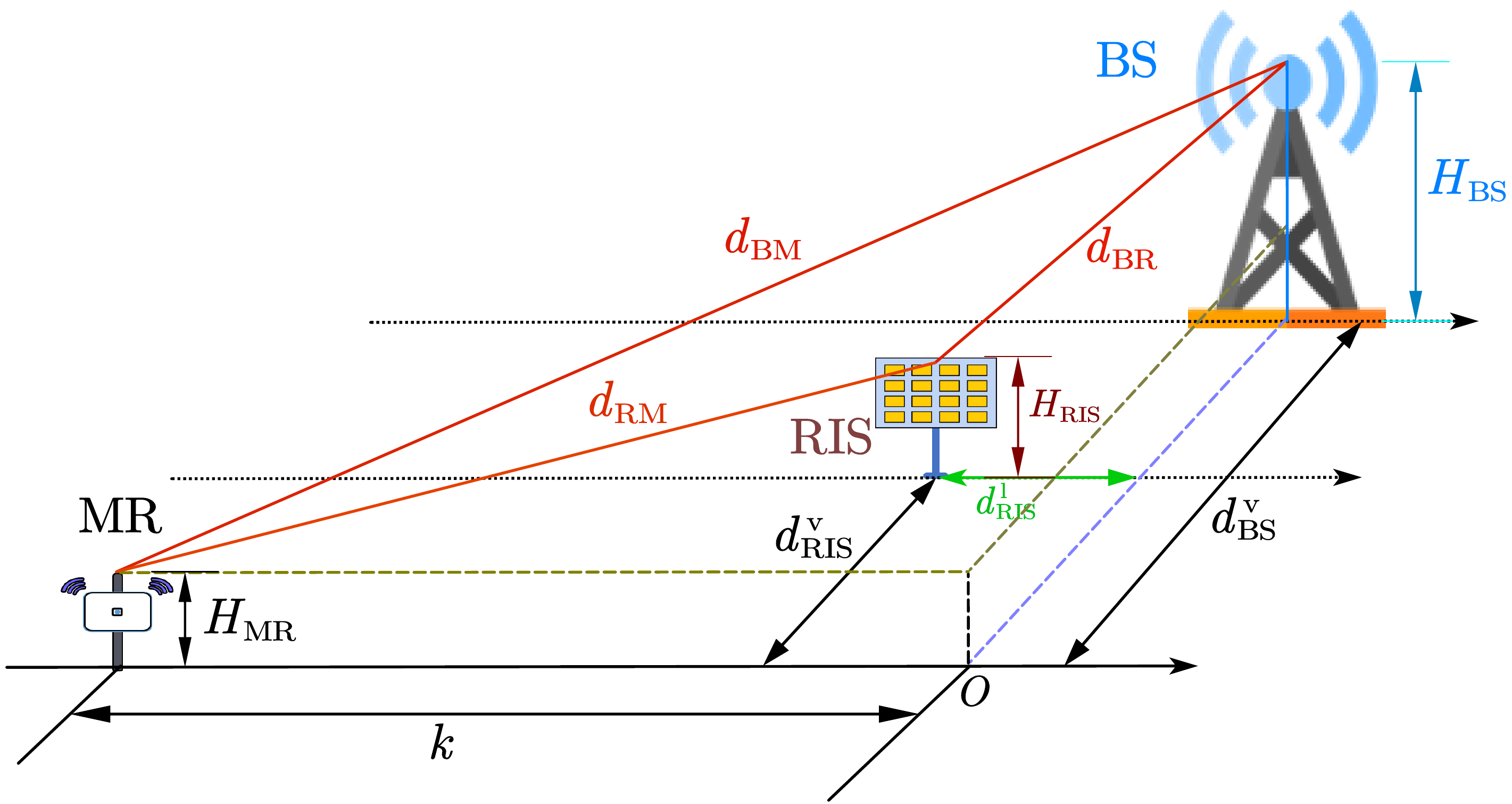}\label{fig:1b}} 
  }
  \caption{ \label{fig:1} (a): Illustration of a RIS-assisted SISO downlink system for HST communications.  (b): Geometrical representation of  RIS-assisted HST communications.}
\end{figure*}

\vspace{-3mm}
\subsection{Channel Model}
In the system, a RIS is equipped with $N$ reflecting elements. We assume that each element of  RIS is capable of independently rescattering signal, which can be dynamically adjusted by RIS controller \cite{t2}, and assume that HST firstly moves with speed $v$  close to BS from a certain position $k$ on the left of BS, and then moves far away from BS.  Let $R$ denote coverage radius of BS, and total time $T$ slots are considered, $\tau$ is the slot duration, and $\mathcal{N} = \left\{1,\cdots, N \right\}$. Being widely used in HST communications, a Rician fading channel is considered for HST scenario \cite{t3,t33}. All the links follow the Rician fading since both line-of-sight (LoS)  and non-line-of-sight (NLoS) exist \cite{t4}. Note that, the transceiver distances (BS-MR, and RIS-MR links) always change across time slots due to high mobility of HST. The Doppler shift caused by train moving affects the signal receiving. However, for HST wireless transmission scheme, the determined train direction and repetitive movement of trains along the fixed tracks causes the repeated and predictable Doppler shift effects, and it is thus possible to track and compensate the Doppler effect \cite{n1,n2}. In addition, there have some research results on the Doppler effect estimation and compensation for HST communications \cite{n3,n4,n5,n6}. Therefore, it is reasonable to assume that the Doppler effect can be well compensated. The detailed descriptions of BS-MR, BS-RIS and RIS-MR channels are given in the following.

\subsubsection{BS-MR Channel}
The channel between BS and  MR is denoted as $h_{\rm{BM} }\left( t \right) \in \mathbb{C}$. In time slot $t \in \left\{1,\cdots, T \right\}$, it can be expressed as
\vspace{1.5mm}
\begin{equation} \label{eq:hbm}
  h_{\rm{BM}} \left( t \right)=\sqrt{\frac{\kappa _{\rm{BM}}}{\kappa _{ \rm{ BM} }+1}}\bar{h}_{ \rm{BM}}\left( t \right)+\sqrt{\frac{1}{\kappa _{ \rm{BM} }+1}}\tilde{h}_{ \rm{BM}}\left( t \right),
\end{equation}
where $\kappa_{\rm{BM}} \geq 0 $ is the Rician $K$-factor, $\bar{h}_{\rm{BM}}\left( t \right)\in \mathbb{C} $ is the LoS component depending on BS-MR link and remains stable with in each time slot\footnote{We ignore shadowing effects and assume that large-scale component is determined only by distance-based path-loss \cite{t5}.}, and $\tilde{h}_{ \rm{BM}}\left( t \right) \in \mathbb{C} $ is the NLoS component. According to \cite{t6}, the LoS component of the  channel between  BS and MR can be given as 
\begin{equation}
  \bar{h}_{\rm{BM}}\left( t \right)=\sqrt{PL _{\rm{BM}}\left( t \right)}e^{-j\theta_{\rm{BM}}\left( t \right)},
\end{equation}
 where $PL_{\rm{BM}}\left( t \right) = d_{\rm{BM}}^{-\varepsilon_{\rm{BM}}}\left( t \right) > 0$  represents the distance-dependent path-loss, and $d_{\rm{BM}}\left( t \right)$ is the distance between BS and MR, as shown in F{}ig.~\ref{fig:1b}, which can be calculated as
\begin{equation}
  d_{\rm{BM}}\left( t \right) = \sqrt{\left( d_{\rm{BS}}^{\rm{v}} \right) ^2+\left( H_{\rm{BS}}-H_{\rm{MR}} \right) ^2+\left( k-vt\tau \right) ^2},
\end{equation}
where $d_{\rm{BS}}^{\rm{v}}$ denotes vertical distance from  BS to rail, $H_{\rm{BS}}$ and $H_{\rm{MR}}$ denote the heights of the BS and MR, respectively, and $k$ denotes initial horizontal distance of HST close to the BS, and $\varepsilon_{\rm{BM}}$ is  path-loss exponent, $\theta _{\rm{BM}}\left( t \right) = \frac{2\pi}{\lambda}d_{\rm{BM}}\left( t \right)$ is phase, where $\lambda = \frac{c}{f}$ is wavelength, $c$ denotes light speed, and $f$ represents carrier frequency. Similarly, the NLoS component can be written as
\begin{equation}
  \tilde{h}_{ \rm{BM}}\left( t \right) = \sqrt{PL _{\rm{NLoS}}^{\rm{BM}}\left( t \right)}\tilde{h}_{\rm{NLoS}}^{\rm{BM}}\left( t \right),
\end{equation}
where  $PL _{\rm{NLoS}}^{\rm{BM}}\left( t \right) = d_{\rm{BM}}^{-\varepsilon'_{\rm{BM}}}\left( t \right) > 0$ is the distance-dependent path-loss for the NLoS case, $\varepsilon'_{\rm{BM}}$ is path-loss exponent, and $\tilde{h}_{\rm{NLoS}}^{\rm{BM}}\left( t \right)\sim \mathcal{C} \mathcal{N} (0,1)$ denotes small-scale channel component.

\subsubsection{RIS-Assisted Channel}
$\forall n \in \mathcal{N}$, we assume that the channel between BS and the $n$-th RIS element, and between the $n$-th RIS element and MR are denoted as $h_{\rm{BR}}^{n}\left( t \right) \in \mathbb{C}$ and $h_{\rm{RM}}^{n}\left( t \right) \in \mathbb{C}$, respectively. In time slot $t$, $h_{\rm{BR}}^{n}\left( t \right)$ and $h_{\rm{RM}}^{n}\left( t \right)$ can be given as
\begin{equation} \label{eq:hbr}
  h_{\rm{BR}}^{n} \left( t \right)=\sqrt{\frac{\kappa _{\rm{BR}}}{\kappa _{ \rm{ BR} }+1}}\bar{h}_{ \rm{BR}}^{n}+\sqrt{\frac{1}{\kappa _{ \rm{BR} }+1}}\tilde{h}_{ \rm{BR}}^{n}\left( t \right),
\end{equation}
\begin{equation} \label{eq:hrm}
  h_{\rm{RM}}^{n} \left( t \right)=\sqrt{\frac{\kappa _{\rm{RM}}}{\kappa _{ \rm{RM}}+1}}\bar{h}_{ \rm{RM}}^{n}\left( t \right)+\sqrt{\frac{1}{\kappa _{ \rm{RM} }+1}}\tilde{h}_{ \rm{RM}}^{n}\left( t \right),
\end{equation}
where $\kappa_{\rm{BR}} \geq 0 $ and $\kappa_{\rm{RM}} \geq 0 $ are Rician $K$-factors, $\bar{h}_{ \rm{BR}}^{n}\in \mathbb{C}$ and  $ \bar{h}_{ \rm{RM}}^{n}\left( t \right) \in \mathbb{C}$ denote the LoS components, and $\tilde{h}_{ \rm{BR}}^{n}\left( t \right)\in \mathbb{C}$ and $\tilde{h}_{ \rm{RM}}^{n}\left( t \right) \in \mathbb{C}$ denote the NLoS components. Similar to the BS-MR link, we have 
\begin{equation}
  \bar{h}_{ \rm{BR}}^{n} =  \sqrt{PL _{\rm{BR}}^{n}}e^{-j\theta_{\rm{BR}}^{n}}, 
\end{equation}
\begin{equation}
  \bar{h}_{ \rm{RM}}^{n}\left( t \right) =  \sqrt{PL _{\rm{RM}}^{n}\left( t \right)}e^{-j\theta_{\rm{RM}}^{n}\left( t \right)},
\end{equation}
 where $PL_{\rm{BR}}^{n} = \left(d_{\rm{BR}}^{n}\right )^{-\varepsilon_{\rm{BR}}}>0$, $PL_{\rm{RM}}^{n}\left( t \right) = \left(d_{\rm{RM}}^{n} \left( t \right)\right )^{-\varepsilon_{\rm{RM}}} > 0$ represent  path-loss exponents, where $d_{\rm{BR}}^{n}$ and $d_{\rm{RM}}^{n}\left( t \right)$ denote the distance between BS and the $n$-th RIS and the distance between the $n$-th RIS element and MR, respectively, as shown in F{}ig.~\ref{fig:1b}, which can be given as
\begin{equation}
  d_{\rm{BR}}^{n} = \sqrt{\left( d_{\rm{RIS}}^{\rm{l}} \right) ^2+\left( H_{\rm{BS}}-H_{\rm{RIS}} \right) ^2+\left( d_{\rm{BS}}^{\rm{v}}-d_{\rm{RIS}}^{\rm{v}} \right) ^2},
\end{equation}
\begin{equation}
  d_{\rm{RM}}^{n}\left( t \right) = \sqrt{\left( d_{\rm{RIS}}^{\rm{v}} \right) ^2+\left( H_{\rm{RIS}}-H_{\rm{MR}} \right) ^2+\left( k-vt\tau -d_{\rm{RIS}}^{\rm{l}} \right) ^2},
\end{equation}
 where $d_{\rm{RIS}}^{\rm{l}}$ denotes horizontal distance between  BS and  RIS. Since  BS and  MR are far away from RIS, their distances from different RIS elements are approximately the same, i.e., $d_{\rm{BR}}^{n}\left( t \right) = d_{\rm{BR}}^{\rm{c}}\left( t \right)$, and $d_{\rm{RM}}^{n}\left( t \right) = d_{\rm{RM}}^{\rm{c}}\left( t \right)$, where $d_{\rm{BR}}^{\rm{c}}\left( t \right)$ and $d_{\rm{RM}}^{\rm{c}}\left( t \right)$ are the distance between BS and the center of RIS and the distance between MR and the center of RIS, respectively, and $d_{\rm{RIS}}^{\rm{v}}$ denotes vertical distance from RIS to rail, $H_{\rm{RIS}}$ denotes the height of the RIS, and $\varepsilon_{\rm{BR}}$, $\varepsilon_{\rm{RM}}$ denote path-loss exponents, $\theta_{\rm{BR}}^{n}\ = \frac{2\pi}{\lambda}d_{\rm{BR}}^{n}$, $\theta_{\rm{RM}}^{n}\left( t \right) = \frac{2\pi}{\lambda}d_{\rm{RM}}^{n}\left( t \right)$ are phases. Similarly, the NLoS components can be written as 
 \vspace{1mm}
 \begin{equation}
\tilde{h}_{ \rm{BR}}^{n}\left( t \right) = \sqrt{ PL _{\rm{NLoS,BR}}^{n}}\tilde{h}_{\rm{NLoS,BR}}^{n}\left( t \right),
 \end{equation}
 \begin{equation}
  \tilde{h}_{ \rm{RM}}^{n}\left( t \right) = \sqrt{PL _{\rm{NLoS,RM}}^{n}\left( t \right)}\tilde{h}_{\rm{NLoS,RM}}^{n}\left( t \right),
 \end{equation}
 where $PL _{\rm{NLoS,BR}}^{n} = \left(d_{\rm{BR}}^{n}\right) ^{-\varepsilon'_{\rm{BR}}}> 0$, $ PL _{\rm{NLoS,RM}}^{n}\left( t \right) = \left(d_{\rm{RM}}^{n}\left( t \right)\right)^{-\varepsilon'_{\rm{RM}}} > 0$ denote the distance-dependent path-loss for NLoS case, $\varepsilon'_{\rm{BR}}$, $\varepsilon'_{\rm{RM}}$ are path-loss exponents, $\tilde{h}_{\rm{NLoS,BR}}^{n}\left( t \right), \tilde{h}_{\rm{NLoS,RM}}^{n}\left( t \right)\sim \mathcal{C} \mathcal{N} (0,1)$ denote small-scale channel components. 

According to the aforementioned, the received signal at MR in time slot $t$ can be expressed as
\begin{align}
    y\left( t \right) &= \sqrt{P}\left(  h_{\rm{BM}} \left( t \right) + \sum_{n=1}^N{h^{n}_{{\rm{RM}} }\left( t \right) e^{j\theta _n\left( t \right)}h^{n}_{{\rm{BR}} }}\left( t \right)   \right) x\left( t \right) \\
    &+ z\left( t \right), \nonumber 
\end{align}
where $P$ denotes transmission power, $x(t)\sim \mathcal{C} \mathcal{N} (0,1)$ denotes the signal transmitted to MR during time slot $t \in \left\{1,\cdots, T \right\}$ with zero mean and variance equal $1$,  $z\left( t \right) \sim \mathcal{C} \mathcal{N} (0,\sigma ^2)$ denotes i.i.d. additive white Gaussian noise at MR with zero mean and variance $\sigma^{2} = 0$, and $\theta _n\left( t \right)$ represents phase of the $n$-th RIS element. For ease of  implementation, we consider that the phase at each RIS element can only take a finite discrete values with equal quantization intervals $\left[0, 2\pi \right)$. Let $b$ denote number of RIS quantization bits. Then the set of phases at each element is given by $ \mathcal{F} =\left\{ 0, \Delta\theta ,\cdots ,\Delta\theta  \left( M-1 \right) \right\} $, where $ \Delta\theta  =\frac{2\pi}{M}$ and $M=2^b$. Accordingly, the equivalent channel between BS and MR can be represented as
\begin{equation} \label{eq:h}
  h\left( t \right) = h_{\rm{BM}} \left( t \right) + \sum_{n=1}^N{h^{n}_{{\rm{RM}} }\left( t \right) e^{j\theta _n\left( t \right)}h^{n}_{{\rm{BR}} }}\left( t \right),
\end{equation}
\vspace{3mm}
and the SNR at MR in time slot $t$ is given by
\vspace{4mm}
\begin{equation}
  \gamma \left( t \right) = \frac{P\left|h_{\rm{BM}} \left( t \right) + \sum_{n=1}^N{h^{n}_{{\rm{RM}} }\left( t \right) e^{j\theta _n\left( t \right)}h^{n}_{{\rm{BR}} }}\left( t \right)  \right|^2 }{\sigma^2},
\end{equation}
where $\left| \cdot \right|$ is the absolute value of a complex value.
 
\vspace{-3mm}
\section{ Coverage Probability and Travel Distance Maximization}
In this section, at first,  we derive the expression of coverage probability according to CSI. Second, we aim to maximize travel distance by jointly considering RIS discrete phase $\theta _n\left( t \right)$ optimization and horizontal distance between BS and RIS $ d_{\rm{RIS}}^{\rm{l}}$ optimization. 
\vspace{-3mm}
\subsection{Coverage Probability}
Coverage probability $P_{\rm{cov}}\left( t \right)$ is defined as the probability that the effectively received SNR $\gamma \left( t \right)$ at MR is larger than a given SNR threshold $\gamma_{th}$, which can be given by
\vspace{1mm}
\begin{equation} \label{eq:pcov}
  \begin{array}{l}
    P_{\rm{cov}}\left( t \right)=\mathrm{Pr}\left( \gamma \left( t \right) \geqslant \gamma _{\rm{th}} \right) \\
  =\mathrm{Pr}\left( \frac{P\left| h_{\rm{BM}} \left( t \right) + \sum_{n=1}^N{h^{n}_{{\rm{RM}} }\left( t \right) e^{j\theta _n\left( t \right)}h^{n}_{{\rm{BR}} }}\left( t \right) \right|^2}{\sigma ^2}\geqslant \gamma _{\rm{th}} \right) \\
  =1-\mathrm{Pr}\left( \frac{P\left| h_{\rm{BM}} \left( t \right) + \sum_{n=1}^N{h^{n}_{{\rm{RM}} }\left( t \right) e^{j\theta _n\left( t \right)}h^{n}_{{\rm{BR}} }}\left( t \right) \right|^2}{\sigma ^2}<\gamma _{\rm{th}} \right) \\ 
  =1-\mathrm{Pr} \left( \left| h_{\rm{BM}} \left( t \right) + \sum_{n=1}^N{h^{n}_{{\rm{RM}} }\left( t \right) e^{j\theta _n\left( t \right)}h^{n}_{{\rm{BR}} }}\left( t \right) \right|^2    \right. \\ \left.
  \quad <\frac{\gamma _{\rm{th}}}{\bar{\gamma}}  \right) \\
  =1-\mathrm{Pr}\left( \left| h\left( t \right) \right|^2<\frac{\gamma _{\rm{th}}}{\bar{\gamma}} \right), \\
  \end{array}
\end{equation}
where $\bar{\gamma}  = \frac{P}{\sigma^2}$ denotes the average transmission SNR, and let $ P_{\rm{out}}\left( t \right) = \mathrm{Pr}\left( \left| h\left( t \right) \right|^2<\frac{\gamma _{\rm{th}}}{\bar{\gamma}} \right)$, thus $P_{\rm{cov}}\left( t \right)$ can be written as 
\begin{equation}
P_{\rm{cov}}\left( t \right) = 1 -  P_{\rm{out}}\left( t \right).
\end{equation}

The probability distribution of $h\left( t \right)$ is given by Theorem~\ref{theorem1}.

\begin{theorem}
  \label{theorem1}
  The equivalent channel $h\left(t\right)$ between BS and MR, follows complex-valued Gaussian distribution with mean $\mu_h\left(t\right)$ and  variance $\sigma_{h}^2\left(t\right)$, namely, $h\left(t\right) \sim \mathcal{C} \mathcal{N} \left( \mu_h\left(t\right),\sigma_{h}^2\left(t\right)  \right)$, and we have
\begin{align}
    \mu_h\left(t\right) &=\rho _{\rm{BM} }\sqrt{PL _{\rm{BM} }\left(t\right)}e^{-j\theta _{ \rm{BM}}\left(t\right)} \\ \nonumber
    &+\sum_{n=1}^N\rho_{\rm{RM} }\rho _{\rm{BR} }\sqrt{PL _{ \rm{RM}}^{n}\left(t\right)}\sqrt{PL _{ \rm{BR}}^{n}} \\ \nonumber
    &\times e^{j\left( \theta _n\left(t\right)-\theta_{\rm{RM}}^{n}\left(t\right)-\theta_{\rm{BR}}^{n} \right)},\nonumber
\end{align}
\vspace{1mm}
\begin{align}
  \sigma _{h}^{2}\left(t\right)&=\varrho_{\rm{BM}}^2PL _{\rm{NLoS}}^{\rm{BM}}\left(t\right) \\ \nonumber
  &+\sum_{n=1}^{N}\varrho_{\rm{RM}}^2\varrho_{\rm{BR}}^2PL _{\rm{NLoS,RM}}^{n}\left(t\right)PL _{{\rm{NLoS,BR}}}^{n}\left(t\right), \nonumber
\end{align}
where $\rho_{\mathrm{BM}}=\sqrt{\frac{\kappa _{{\mathrm{BM}}}}{\kappa _{{\mathrm{BM}}}+1}}$, $\varrho _{{\mathrm{BM}}}=\sqrt{\frac{1}{\kappa _{\mathrm{BM}}+1}}$, $\rho _{\mathrm{BR}}=\sqrt{\frac{\kappa _{{\mathrm{BR}}}}{\kappa _{{\mathrm{BR}}}+1}}$, $\varrho _{{\mathrm{BR}}}=\sqrt{\frac{1}{\kappa _{{\mathrm{BR}}}+1}}$, $\rho _{\mathrm{RM}}=\sqrt{\frac{\kappa _{\mathrm{RM}}}{\kappa _{\mathrm{RM}}+1}}$, and $\varrho _{{\mathrm{RM}}}=\sqrt{\frac{1}{\kappa _{\mathrm{RM}}+1}}$.
\end{theorem}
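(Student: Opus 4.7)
The plan is to treat $h(t)$ in (14) as a sum of $N+1$ independent random contributions—one from the direct BS-MR link and one from each RIS element—and to compute the first two moments explicitly before invoking a CLT-type argument for the Gaussian conclusion. First, I would substitute the Rician decompositions (1), (5), (6) into (14) and take expectations using linearity, mutual independence of $\tilde h_{\rm BM}$, $\tilde h^n_{\rm BR}$, $\tilde h^n_{\rm RM}$, and their zero-mean property. Each product expectation $E[h^n_{\rm RM}(t)\,e^{j\theta_n(t)}\,h^n_{\rm BR}(t)]$ then collapses to its LoS-LoS piece $e^{j\theta_n(t)}\,\rho_{\rm RM}\rho_{\rm BR}\,\bar h^n_{\rm RM}(t)\,\bar h^n_{\rm BR}$; substituting the phase and path-loss definitions from (7)-(8) yields the claimed expression for $\mu_h(t)$ term by term.

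For the variance, I would write $h(t)-\mu_h(t)=\varrho_{\rm BM}\tilde h_{\rm BM}(t)+\sum_n e^{j\theta_n(t)}(X_n-E[X_n])$, where each RIS summand $X_n=h^n_{\rm RM}(t)\,h^n_{\rm BR}(t)$ expands into a LoS$\times$NLoS cross term, a NLoS$\times$LoS cross term, and a NLoS$\times$NLoS product term. Independence across $n$ and between the BR and RM links makes these $N+1$ deviations mutually uncorrelated, so variances add. The BM link contributes $\varrho_{\rm BM}^{2}\,PL_{\rm NLoS}^{\rm BM}(t)$ and the NLoS$\times$NLoS piece of each RIS summand contributes exactly $\varrho_{\rm RM}^{2}\varrho_{\rm BR}^{2}\,PL_{\rm NLoS,RM}^{n}(t)\,PL_{\rm NLoS,BR}^{n}(t)$, matching the stated $\sigma_h^{2}(t)$ after summation over $n$. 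The remaining LoS$\times$NLoS and NLoS$\times$LoS cross terms, whose variances scale as $\rho^{2}\varrho^{2}$ with one path-loss factor drawn from the LoS link, are dropped under the strong-LoS large-$K$ regime ($\varrho\ll\rho$) appropriate to the open HST/RIS far-field line-of-sight setting.

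Finally, I would justify the complex Gaussian distribution. The BM contribution is already $\mathcal{CN}$, and each RIS deviation $e^{j\theta_n(t)}(X_n-E[X_n])$ is a linear combination of independent Gaussian pieces plus the bilinear product $\varrho_{\rm RM}\varrho_{\rm BR}\,\tilde h^n_{\rm RM}\tilde h^n_{\rm BR}$. Although this bilinear product is not itself Gaussian, the RIS aggregate is a sum of $N$ independent summands of comparable scale, and applying the Lindeberg-L\'evy CLT componentwise (real and imaginary parts) yields asymptotic complex Gaussianity as $N$ grows large. Together with the independent BM contribution, this gives $h(t)\sim\mathcal{CN}(\mu_h(t),\sigma_h^{2}(t))$. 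The main obstacle is precisely this distributional step: the product of two independent $\mathcal{CN}$ variables is non-Gaussian, so the claim holds only asymptotically in $N$ and leans on the large-array assumption underlying practical RIS deployments; a secondary subtlety is the omission of the LoS$\times$NLoS cross variance terms, which is an approximation valid under the strong-LoS assumption rather than an exact equality.
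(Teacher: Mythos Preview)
Your proposal follows essentially the same route as the paper: substitute the Rician decompositions into $h(t)$, split into the six pieces $h_1,\dots,h_6$ (deterministic LoS, BM-NLoS, LoS$\times$LoS, the two LoS$\times$NLoS cross terms, and NLoS$\times$NLoS), then compute mean and variance by linearity and independence. Your mean computation matches the paper's exactly.

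Where you differ is in rigor, and in your favor. The paper's Appendix~A simply asserts that all random pieces ``follow a Gaussian distribution'' and, in the variance step, passes silently from the full six-term sum to only the $h_2$ and $h_6$ contributions, dropping the LoS$\times$NLoS cross variances $h_4,h_5$ without comment. You correctly flag both issues: the bilinear NLoS$\times$NLoS term is not Gaussian, so the distributional claim really rests on a CLT across the $N$ RIS elements; and the cross-variance terms are genuinely nonzero and are being discarded as a large-$K$ approximation rather than vanishing exactly. These caveats are accurate and make your argument more honest than the paper's own proof, which treats Theorem~1 as an exact identity when it is in fact an approximation in the regime of large $N$ and large Rician $K$-factors.
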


\begin{proof}
See in Appendix A.
\end{proof}

According to Theorem~\ref{theorem1}, coverage probability can be derived as in Theorem~\ref{theorem2}.
\begin{theorem}
  \label{theorem2}
  $P_{\rm{out}}\left( t \right)$ follows a non-central chi-square distribution, i.e., $\chi ^2 \left(\nu,\zeta \left( t \right)\right)$, with the degree of freedom $\nu = 2 $, and the non-centrality parameter $\zeta\left( t \right) = \frac{\mu_h^2\left(t\right)}{\sigma_{h}^2\left(t\right)}$, where
\begin{equation}
    \begin{array}{l}\left| \mu_h\left( t \right) \right|^2= \left|\rho _{\mathrm{BM}}\sqrt{PL _{\mathrm{BM}}\left( {t} \right)}e^{-j\theta _{_{\mathrm{BM}}}\left( {t} \right)} \right. \\ \left.
    +\sum_{n=1}^N{\rho _{\mathrm{RM}}\rho _{\mathrm{BR}}\sqrt{PL _{_{\mathrm{RM}}}^{n}\left( t \right)}\sqrt{PL _{_{\mathrm{BR}}}^{n}}e^{j\left( \theta _n\left( {t} \right) -\theta _{\mathrm{RM}}^{n}\left( {t} \right) -\theta _{\mathrm{BR}}^{n} \right)}}\right|^2.
\end{array}
\end{equation}

With the corresponding cumulative distribution function (CDF), $P_{\rm{out}}\left( t \right)$ is given by
\begin{equation}
  P_{\rm{out}}\left( t \right) = 1 - Q_{1}\left(\sqrt{\zeta\left( t \right)}, \sqrt{\gamma_0\left( t \right)}\right),
\end{equation}
where $\gamma_0 = \frac{\gamma_{th}}{\bar{\gamma}\sigma _{h}^{2}\left( t \right)}$, and  $Q_m \left(a,b \right)$ is the Marcum Q-function defined in \cite{t8}. Thus, coverage probability can be expressed as 
\begin{align}
  P_{\rm{cov}} = 1 - P_{\rm{out}} = Q_{1}\left(\sqrt{\zeta\left( t \right)}, \sqrt{\gamma_0\left( t \right)}\right).
\end{align}
\end{theorem}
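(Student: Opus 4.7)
The plan is to start directly from Theorem~\ref{theorem1}, which gives $h(t) \sim \mathcal{C}\mathcal{N}(\mu_h(t), \sigma_h^2(t))$, and derive the distribution of $|h(t)|^2$. Writing $h(t) = h_R(t) + j h_I(t)$, the real and imaginary parts are independent real Gaussians whose means are $\mathrm{Re}\{\mu_h(t)\}$ and $\mathrm{Im}\{\mu_h(t)\}$ and whose per-component variances together account for $\sigma_h^2(t)$. Then $|h(t)|^2 = h_R^2(t) + h_I^2(t)$ is a sum of two squared independent Gaussians.

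Next, I would rescale by the common standard deviation of $h_R(t)$ and $h_I(t)$ so that each component becomes a unit-variance Gaussian with a (nonzero) rescaled mean. The resulting normalized sum is, by the definition of the non-central chi-square distribution, $\chi^2(\nu,\zeta(t))$ with $\nu = 2$ degrees of freedom, where the non-centrality parameter equals the sum of the squared rescaled means. This sum simplifies to $\zeta(t) = |\mu_h(t)|^2/\sigma_h^2(t)$, and substituting the closed-form $\mu_h(t)$ from Theorem~\ref{theorem1} and taking squared modulus yields the explicit expression for $|\mu_h(t)|^2$ claimed in the theorem (the cross-terms arising from expanding the modulus account for the mixed exponential factors $e^{j(\theta_n(t)-\theta_{\mathrm{RM}}^n(t)-\theta_{\mathrm{BR}}^n)}$).

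Finally, I would invoke the standard closed-form CDF of a non-central chi-square distribution with $2$ degrees of freedom, which is expressible via the first-order Marcum Q-function as $F_\chi(x;\zeta) = 1 - Q_1(\sqrt{\zeta},\sqrt{x})$. Applying this to $P_{\mathrm{out}}(t) = \mathrm{Pr}\bigl(|h(t)|^2 < \gamma_{th}/\bar{\gamma}\bigr)$ with $x = \gamma_0(t) = \gamma_{th}/(\bar{\gamma}\sigma_h^2(t))$ delivers the stated formula, and $P_{\mathrm{cov}}(t) = 1 - P_{\mathrm{out}}(t) = Q_1(\sqrt{\zeta(t)},\sqrt{\gamma_0(t)})$ then follows immediately from (\ref{eq:pcov}).

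The main obstacle I anticipate is keeping the variance convention of the complex Gaussian consistent throughout, so that the non-centrality parameter and the two Marcum Q-function arguments emerge with exactly the normalization appearing in the statement rather than with stray factors of $1/2$ or $2$. Concretely, whether $\sigma_h^2(t)$ is meant as total variance (with each of $h_R(t)$ and $h_I(t)$ carrying $\sigma_h^2(t)/2$) or per-component variance affects the rescaling step, so I would verify this convention by tracing the Rician decomposition in (\ref{eq:hbm}), (\ref{eq:hbr}), and (\ref{eq:hrm}) and the small-scale terms $\tilde{h}_{\mathrm{NLoS}}^{\mathrm{BM}}, \tilde{h}_{\mathrm{NLoS,BR}}^n, \tilde{h}_{\mathrm{NLoS,RM}}^n \sim \mathcal{C}\mathcal{N}(0,1)$ before identifying the chi-square parameters.
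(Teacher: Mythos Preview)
Your proposal is correct and follows essentially the same route as the paper's proof in Appendix~B: start from the complex Gaussian law of $h(t)$ in Theorem~\ref{theorem1}, observe that $|h(t)|^2/\sigma_h^2(t)$ is non-central $\chi^2$ with two degrees of freedom and non-centrality $\zeta(t)=|\mu_h(t)|^2/\sigma_h^2(t)$, and then read off $P_{\rm out}$ and $P_{\rm cov}$ from the Marcum $Q$-function CDF. The paper's argument is terser (it asserts the $\chi^2$ law directly rather than decomposing into real and imaginary parts), and your caveat about tracking the variance convention is well placed, but there is no substantive difference in method.
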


\begin{proof}
  See Appendix B.
\end{proof}
\vspace{-1mm}
In order to verify of Theorem 2, Fi{}g.~{\ref{fig:2}} depicts the analysis and simulation results of coverage probability $P_{\rm{cov}}$ against time slot, and compares it with the case that ideal phase, random phase, andwithout considering RIS deployment. As aforementioned in Section II, HST firstly moves with speed $v$ close to BS from a certain position $k$ on the left of BS, and then moves far away from BS, and RISs are deployed a position close to the track. The main settings of the simulation are shown in Table I. This figure shows fairly well match between analysis and simulation results. It is observed that coverage probability firstly increases with time slot. When time slot is further increased, coverage probabilities converge to the maximum value and remains for a period of time. Finally, with time slot gradually increasing to the end, coverage probabilities under the four schemes decrease. The reason is that channel state is gradually improved when HST moves close to BS from distance $k$ initially, thus coverage probability increases. Afterward, the channel remains a fairly good stage for a period of time, whereas coverage probability is fixed to the maximum value. Finally, coverage probability decreases gradually since channel state becomes worse when HST moves far away from BS. In addition, we observe that coverage probability of the proposed algorithm is higher than the cases without RIS and with random phase, and has similar performance to ideal phase case. In a word, it is clear that when the phases of RIS elements are optimized, and RIS-assisted system has better coverage performance than the case that without RIS deployment.
\begin{figure}[!t]
  \centering 
 {\includegraphics[scale=0.6]{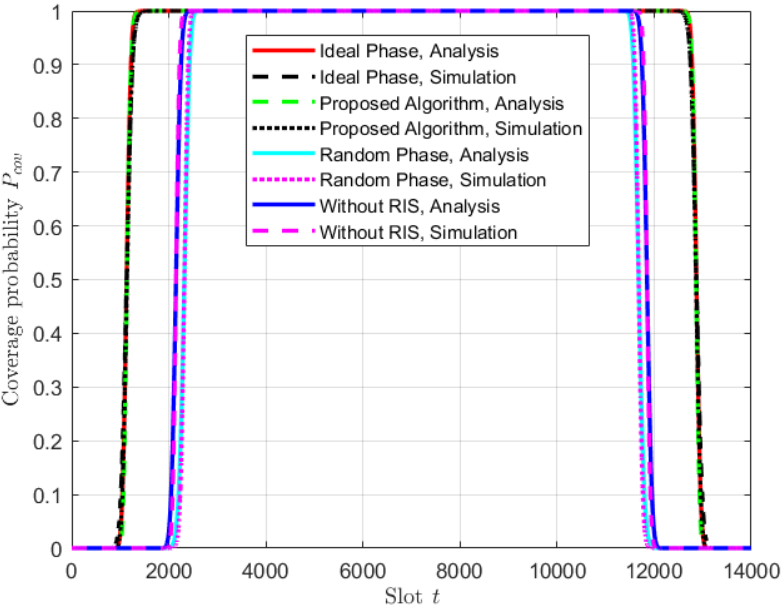}} 
  \caption{ \label{fig:2} Coverage probability vs. slot. System parameters: $k = 7,000$ m, $T=14,000$, $P = 30$ dBm, $\gamma_{th} = 10$ dB, $N=60$ and $b=2$.} 
\end{figure}

\vspace{-3mm}
\subsection{Travel Distance Maximization } 
In this paper, we target on HST traveled distance $D $ maximization by jointly considering RIS elements discrete phase $\theta _n\left( t \right)$ optimization and horizontal distance between BS and RIS $d_{\rm{RIS}}^{\rm{l}}$ optimization, namely, RIS placement optimization, subject to constraint of coverage probability, where $D$ is defined as the maximum travel distance when coverage probability satisfies a given coverage probability threshold in total time $T$ slots. We consider discrete values of $d_{\rm{RIS}}^{\rm{l}}$, as shown in F{}ig.~\ref{fig:1b}, and let $\mathcal{D} = \left\{0,\Delta d,\cdots,\Delta d\left(S-1\right) \right\} $, where $\Delta d = \frac{R}{S}$, $S$ denotes the number of interval. 
The HST traveled distance maximization problem within each time slot is mathematically formulated as
\vspace{-3mm}
\begin{align}  \label{eq:P1} 
    \mathcal{P}_1: \underset{d_{\rm{RIS}}^{\rm{l}}, \theta _n\left( t \right)}{\max } & D =\sum_{t=1}^T\beta\left(t\right) {d_t}  \\ 
    \mathrm{s}.\mathrm{t}. \quad & P_{\rm{cov}}\left( t \right) \geq P_{\rm{th}} \tag{\ref{eq:P1}a},   \label{yya} \\ 
    &\theta _n\left( t \right) \in \mathcal{F} ,n\in N,  \tag{\ref{eq:P1}b} \\ 
    & d_{\rm{RIS}}^{\rm{l}} \in \mathcal{D},   \tag{\ref{eq:P1}c} \\
    & \beta\left(t\right) \in \left\{0,1 \right\}, \tag{\ref{eq:P1}d} 
  \end{align}
where $\beta\left(t\right) \in \left\{0,1 \right\}$ is a binary variable, i.e., $\beta\left(t\right) = 1$, if $P_{\rm{cov}}\left( t \right) \geq P_{\rm{th}}$, otherwise, $\beta\left(t\right) = 0$.  Parameter $d_t = v\tau$ denotes the traveled distance of HST in a slot duration, and $P_{\rm{th}}$ is coverage probability threshold.

\vspace{-3mm}
\section{Solution to the Travel Distance Maximization Problem}
In this section, we design a maximal travel distance algorithm to solve problem $\mathcal{P}_1$ by jointly RIS discrete phase and placement optimization based on the alternating optimization. The key idea is to optimize one of the variables while fixing another variable in an alternating method, according to the decoupling of the two variables. Hence, we may transform the optimization problem formulated in $\mathcal{P}_1$ into two subproblems, i.e., discrete phase optimization subproblem and RIS placement subproblem, and solve the two subproblems successively. Finally, we can obtain the optimal strategy $\left( d_{\rm{RIS}}^{l,\ast}, \theta _n^\ast\left( t \right)\right)$, where $ d_{\rm{RIS}}^{l, \ast} $ and $\theta _n^\ast\left( t \right)$ denote the optimal horizontal distance between BS and RIS and RIS phase, respectively.
\vspace{-3mm}
\subsection{Discrete Phase Optimization Subproblem}
In this subsection, we suppose that RIS is fixed in a position, i,e,. $d_{\rm{RIS}}^{\rm{l}}$ is given, and the optimization problem formulated in $\mathcal{P}_1$ can be expressed as
\vspace{-3mm}
\begin{align} \label{eq:P2}
  \mathcal{P}_2: \underset{\theta _n\left( t \right)}{\max } & \ D =\sum_{t=1}^T\beta\left(t\right) {d_t}   \\  
     \mathrm{s}.\mathrm{t}. \ & P_{\rm{cov}}\left( t \right) \geq P_{\rm{th}}, \tag{\ref{eq:P2}a} \\  
     &\theta _n\left( t \right) \in \mathcal{F} ,n\in N,  \tag{\ref{eq:P2}b} \\
     & \beta\left(t\right) \in \left\{0,1 \right\}. \tag{\ref{eq:P2}c}
\end{align}

\vspace{-1.5mm}
According to the expression of coverage probability \eqref{eq:pcov}, it is related to channel response. However, the RIS phase has an effect on the channel gain of $h(t)$, which also has  influence on coverage probability $P_{\rm{cov}}\left( t \right)$. Since, $\beta\left(t\right) = 1$, $d_t = v\tau$, if $P_{\rm{cov}}\left( t \right) \geq P_{\rm{th}}$, otherwise, $\beta\left(t\right) = 0$, i.e., travel distance equal zero at time slot $t$. In other words, we should make coverage probability as large as possible to satisfy coverage probability constraint \eqref{yya}. Thus, $\mathcal{P}_2$ can be equivalent to maximize coverage probability  $P_{\rm{cov}}\left( t \right)$  as follows:
\begin{align} \label{eq:P22}
  \mathcal{P}_3: \underset{\theta _n\left( t \right)}{\max } & \ P_{\rm{cov}}\left( t \right)   \\  
     \mathrm{s}.\mathrm{t}. \ & \theta _n\left( t \right) \in \mathcal{F} ,n\in N. \tag{\ref{eq:P22}a}  
\end{align}

When fixing the horizontal distance between BS and RIS $d_{\rm{RIS}}^{\rm{l}}$, we can optimize RIS phase. Parameter $\mathcal{F}$ contains a series of discrete variables, and the range available for each phase depends on RIS quantization bits. Considering the complexity and validity, we exploit the local search method as shown in Algorithm~\ref{phase} to solve this problem. Specifically, keeping the other $N-1$ phase values fixed, for each element $\theta_n\left( t \right)$, we traverse all possible values and choose the optimal one without violating coverage probability constraint. Then, we use this optimal solution $\theta _n^\ast\left( t \right)$ as the new value of $\theta _n\left( t \right)$  for the optimization of another phase, until all phases in the set $\mathcal{F}$ are fully optimized.
\begin{algorithm}[!t]
  \caption{Local Search for Discrete Phase}
  \label{phase}
  \begin{algorithmic}[1]
  \REQUIRE
 number of RIS elements $N$, number of quantization bits $b$
  \ENSURE
    {$\theta _n^\ast\left( t \right),~\forall n \in \mathcal{N}$ }
  \FOR {$n=1:N$}
  \FOR {$\theta _n\left( t \right)\in \mathcal{F}$}
  \STATE Assign all possible values to $\theta _n\left( t \right)$, and select the value maximizing coverage probability $P_{\rm{cov}} \left( t \right)$ denoted as $\theta _n^\ast\left( t \right) =\left\{\arg{\max }\ P_{\rm{cov}}\left( t \right) \left|\theta _n\left( t \right)\in \mathcal{F} \right.\right\}$;
  \STATE $\theta _n\left( t \right)=\theta _n^\ast\left( t \right)$;
  \ENDFOR
  \ENDFOR
  \end{algorithmic} 
\end{algorithm} 

Note that, in this paper, for a time slot $t$, even if each variable in $\mathcal{F}$  is traversed, coverage probability constraint may not be satisfied due to poor channel state. In addition, $h\left( t \right)$ varies with the location of HST, and it is impossible to satisfy coverage probability constraint for all time slot $t \in \left\{1,\cdots,T \right\}$ even with the optimal phase $\theta _n^\ast\left( t \right)$. Therefore, coverage probability is not only related to RIS phase $\theta_n\left( t \right)$, but also to the equivalent channel $h\left( t \right)$ between BS and MR.

\vspace{-4mm}
\subsection{RIS Placement Subproblem}
In this subsection, we suppose the RIS elements phases are given, i.e., $\theta _n\left( t \right)$ is given, and the optimization problem formulated in $\mathcal{P}_1$ can be expressed as
\begin{align} \label{eq:P3}
  \mathcal{P}_2: \underset{d_{\rm{RIS}}^{\rm{l}}}{\max } & \ D =\sum_{t=1}^T\beta\left(t\right) {d_t} \\  
     \mathrm{s}.\mathrm{t}. \ & P_{\rm{cov}}\left( t \right) \geq P_{\rm{th}},  \tag{\ref{eq:P3}a}   \\
& d_{\rm{RIS}}^{\rm{l}} \in \mathcal{D}, \tag{\ref{eq:P3}b} \\
& \beta\left(t\right) \in \left\{0,1 \right\}. \tag{\ref{eq:P3}c}
\end{align}

Similar to the discrete phase optimization subproblem, we exploit the local search method as shown in Algorithm~\ref{rispos} to solve this problem. Specifically, keeping all RIS elements phase values fixed, for $d_{\rm{RIS}}^{\rm{l}}$, we traverse all possible values in set $\mathcal{D}$ and judge each time slot whether violating coverage probability constraint until HST runs through all time slots. Finally, we can obtain the optimal RIS position denoted by $ d_{\rm{RIS}}^{l,\ast} = \left\{\arg{\max }\ D \left| d_{\rm{RIS}}^{\rm{l}} \in  \mathcal{D} \right.\right\}$.

\begin{algorithm}[!t]
  \caption{Local Search for RIS Placement}
  \label{rispos}
  \begin{algorithmic}[1]
  \REQUIRE
    the total time $T$ slots, RIS elements phases $\theta _n\left( t \right),~\forall n \in \mathcal{N} $
  \ENSURE
    {$d_{\rm{RIS}}^{l,\ast}$}
  \FOR {$d_{\rm{RIS}}^{\rm{l}} \in \mathcal{D}$}
  \FOR {$t=1:T$}
  \STATE Calculate $D$ defined in \eqref{eq:P1} on the premise that constraint $P_{\rm{cov}}\left( t \right) \geq P_{\rm{th}}$ is satisfied; 
  \ENDFOR
  \ENDFOR
  \STATE $d_{\rm{RIS}}^{l,\ast} = \left\{\arg{\max }\ D \left| d_{\rm{RIS}}^{\rm{l}} \in \mathcal{D} \right.\right\}$;
  \end{algorithmic}
\end{algorithm}

\begin{algorithm}[!t]
  \caption{Travel Distance Maximization}
  \label{traveldis}
  \begin{algorithmic}[1]
  \REQUIRE
    the total time $T$ slots,  number of RIS elements $N$, number of quantization bits $b$, random generate RIS elements phases $\theta _n\left( t \right),~\forall n \in \mathcal{N} $
  \FOR {$d_{\rm{RIS}}^{\rm{l}} \in \mathcal{D}$}
  \FOR {$t=1:T$}
  \FOR {$n=1:N$}
  \FOR {$\theta _n\left( t \right)\in \mathcal{F}$}
  \STATE Assign all possible values to $\theta _n\left( t \right)$, and select the value maximizing coverage probability $P_{\rm{cov}} \left( t \right)$ denoted as $\theta _n^\ast\left( t \right)=\left\{\arg{\max }\ P_{\rm{cov}}\left( t \right) \left|\theta _n\left( t \right)\in \mathcal{F} \right.\right\}$;
  \STATE   $\theta _n\left( t \right)=\theta _n^\ast\left( t \right)$;
  \ENDFOR
  \ENDFOR
  \STATE Calculate $D$ defined in \eqref{eq:P1}  on the premise that constraint  $\eqref{yya}$ is satisfied 
  \ENDFOR
  \ENDFOR
  \STATE $\left( d_{\rm{RIS}}^{l,\ast}, \theta _n^\ast\left( t \right)\right)= \left\{ \arg{\max }\ D \left| d_{\rm{RIS}}^{\rm{l}} \in \mathcal{D} \right., \theta _n\left( t \right) \in \mathcal{F} \right\}$
  \STATE  $D_{\max} = \left\{ D=\sum_{t=1}^T\beta\left(t\right){d_t}\left| d_{\rm{RIS}}^{l,\ast}, \theta _n^\ast\left( t \right) \right.  \right\}$
  \end{algorithmic}
\end{algorithm}

\vspace{-3mm}
\subsection{Travel Distance Maximization}
We summarize the above discrete phase optimization subproblem algorithm and RIS placement subproblem algorithm, and propose the travel distance maximization algorithm. For the given optimal $d_{\rm{RIS}}^{l,\ast}$ and $\theta _n^\ast\left( t \right)$, the optimization problem formulated in $\mathcal{P}_1$ can be expressed as
\begin{align} \label{eq:P4}
  \mathcal{P}_3: \underset{d_{\rm{RIS}}^{l,\ast},\theta _n^\ast\left( t \right)}{\max } & \ D =\sum_{t=1}^T\beta\left(t\right) {d_t} \\  
     \mathrm{s}.\mathrm{t}. \ \ & P_{\rm{cov}}\left( t \right) \geq P_{\rm{th}}, \tag{\ref{eq:P4}a} \\
     & \beta\left(t\right) \in \left\{0,1 \right\}. \tag{\ref{eq:P4}b}
\end{align}

As shown in Algorithm~\ref{traveldis}, we set MR and RIS at a given position in initial state, and randomly generate a initial phase value for $\theta _n\left( t \right),~n \in \mathcal{N}$. Then, we update the RIS placement position and phase in an alternating manner until traverse all possible values in set $\mathcal{D}$. Finally, we can obtain the optimal strategy as
\begin{equation}
  \left( d_{\rm{RIS}}^{l,\ast}, \theta _n^\ast\left( t \right)\right)= \left\{ \arg{\max}~D\left| d_{\rm{RIS}}^{\rm{l}}\in \mathcal{D}, \theta _n\left( t \right) \in \mathcal{F}, n \in \mathcal{N} \right. \right\},
\end{equation}
 and the maximum travel distance is denoted by
 \begin{equation}
  D_{\max} = \left\{ D=\sum_{t=1}^T\beta\left(t\right){d_t}\left| d_{\rm{RIS}}^{l,\ast}, \theta _n^\ast\left( t \right) \right.  \right\}.
 \end{equation}

\section{Numerical Analysis}
In this section, we analyze coverage performance of RIS-assisted HST communication system with the proposed algorithms. Simulation results are provided to validate system performance in terms of coverage probability, travel distance, and transmission rate. The simulation parameters are set as listed in Table~\ref{tab:1}. For comparison, another three schemes are considered:
\begin{itemize}
  \item {\bf Ideal Phase:} this scheme considers that the ideal continuous phase is $\theta_{n}^{\ast}\left(t\right) =-\theta _{\rm{BM}}\left(t\right)-\left( \theta_{\rm{RM}}^{n}\left(t\right)+\theta_{\rm{BR}}^{n} \right)$, $\theta_{n}^{\ast}\left(t\right) \in \left[0, 2\pi \right)$, $n \in \mathcal{N} $, which yields a received signal at MR having the largest amplitude \cite{re43,re45}. 
  \item {\bf Random Phase:} this scheme randomly selects a value of phase $\theta _n\left( t \right)$ for each RIS element and keeps unchanged, where $\theta_n\left( t \right)$ follows an independent uniform distribution over $\left[0, 2\pi \right)$.
  \item {\bf Without RIS:} this scheme does not use RIS for signal reflection, and MR can only receive signals from BS. 
\end{itemize} 

\begin{table*}[!t]   
  \caption{Simulation Parameters}
  \label{tab:1}  
  \centering 
  \begin{tabular}{|c|c|c|c|c|c|} \hline
    \textbf{Parameter}  & \textbf{Symbol}    & \textbf{Value} & \textbf{Parameter}  & \textbf{Symbol}    & \textbf{Value}     \\
    \hline
     {Speed of HST}        & $v$   & $360$ km/h & Noise power   & $\sigma^2$    & \makecell{ $-174~ {\rm{dBm/Hz}}$ \\ $+10\log _{10}B+N_f$}  \\  
    \hline
    \ {Height of BS}       & $H_{\rm{BS}}$      &  $10$ m   & Noise figure & $N_{f}$  &  $10$ dB \\
    \hline
     {Height of RIS}       & $H_{\rm{RIS}}$     &  $2$ m  & Carrier frequency  & $f$  & $2.35$ GHz \\
    \hline
     {Height of MR}        & $H_{\rm{MR}}$      &  $2.5$ m  & {Coverage probability threshold} & $P_{\rm{cov}}$  & $95\%$ \\
    \hline
     Distance from BS to rail  & $d_{\rm{BS}}^{\rm{v}}$  & $50$ m &  Rician $K$-factor  & $\kappa_{\left( {\rm{BM,RM,BR}}\right)}$ & $10$ dB \\
    \hline
    {Distance from RIS to rail} & $d_{\rm{RIS}}^{\rm{v}}$  &   $20$ m & NLoS case path-loss exponent  &  $\varepsilon'_{\left( {\rm{BM,RM,BR}}\right)}$  & $3.6$  \\ 
    \hline
    {System bandwidth}    & $B$     & $20$ MHz & LoS case path-loss exponent  & $\varepsilon_{\left( {\rm{BM,RM,BR}}\right)}$ & $3$  \\
    \hline
  \end{tabular}
  
\end{table*}

\vspace{-6mm}
\subsection{Coverage Probability}
In this subsection, the impact of the four schemes on coverage probability with different parameters is analyzed. Fi{}g.~{\ref{fig:3}} illustrates coverage probability against transmission power $P$ under different number of RIS elements. It can be observed that coverage probabilities under the four schemes increase with increasing transmission power. When transmission power is further increased, coverage probabilities converge to the maximum value. We can also see that the proposed algorithm outperforms the cases without RIS and random phase except for the case ideal phase. In addition, we observe that coverage probability of the proposed algorithm increases with increasing number of RIS elements. This is because when $N$ increases, channel gain of $h\left(t\right)$ increases, which results that the received power at MR is enhanced and coverage probability is thus improved. It is further observed that the proposed algorithm requires lower transmission power than the case without using RIS for the fixed coverage probability value, and the required transmission power decreases with increasing number of RIS elements. This implies that RIS deployment can reduce power consumption and manufacturing cost of transceiver.
\begin{figure}[!t]
  \centering
  {\includegraphics[scale=0.6]{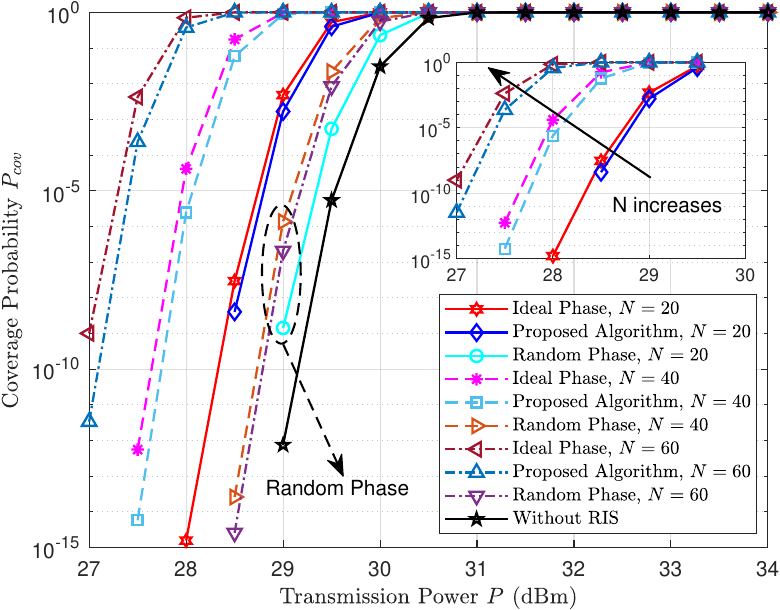}}
  \caption{ \label{fig:3}Coverage probability vs. transmission power. System parameters: $k = 11,000$ m, $T=22,000$, $\gamma_{th} = 10$ dB, and $b=2$.}
\end{figure}

Fi{}g.~{\ref{fig:4}} plots coverage probability versus SNR threshold $\gamma_{th}$ under different number of quantization bit $b$. It can be observed that coverage probabilities under the four schemes decrease with increasing of SNR threshold. When $\gamma_{th}$ increases to infinity, coverage probability of all schemes are $0$. Obviously, improving SNR threshold requirements means more communication outage events and lower coverage probability. We also observe that coverage probability of the proposed algorithm increases with $b$. This is because phase resolution increases with $b$, which results that the received power at MR is enhanced, and leads to  improvement of coverage probability. As expected, the performances of the case ideal phase and random phase are not affected by the number of RIS quantization bits. In addition, it is found that coverage probability of $b = 1$ is significantly lower than $b = 3 $ and $b = 5$ at higher $\gamma_{th}$, which means that coverage probability is significantly improved when $b > 1$. It can be further observed that narrow coverage probability gap exists between $b = 3$ and $b = 5$ and between ideal phase and $b=3,5$, which indicates that coverage probability may remain unchanged with further increasing $b$, and coverage probability obtained by the proposed algorithm approaches the results with ideal phase as number of quantization bits increases.
\begin{figure}[!t]
  \centering
  {\includegraphics[scale=0.6]{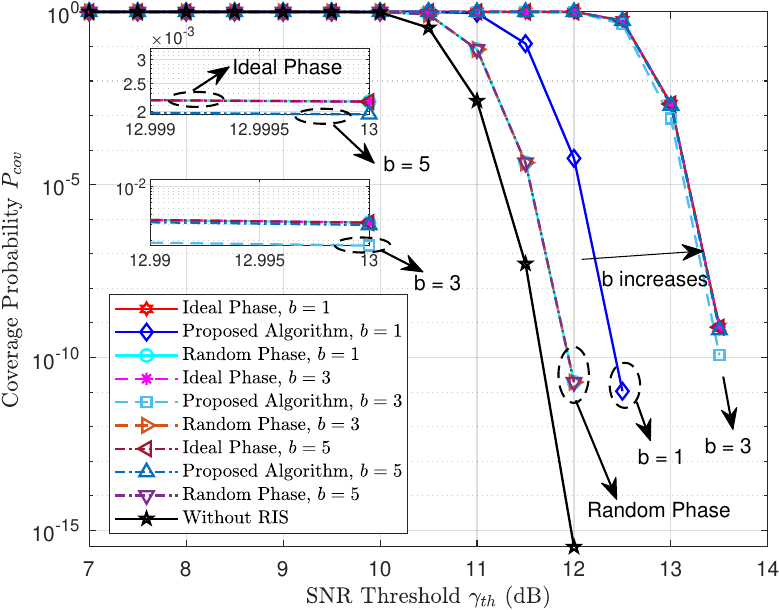}}
  \caption{ \label{fig:4}Coverage probability vs. SNR threshold. System parameters: $k=8,000$ m, $T=16,000$, $P = 30$ dBm, and $N = 50$.}
\end{figure}

\begin{figure}[!t] 
  \centering
  {\includegraphics[scale=0.6]{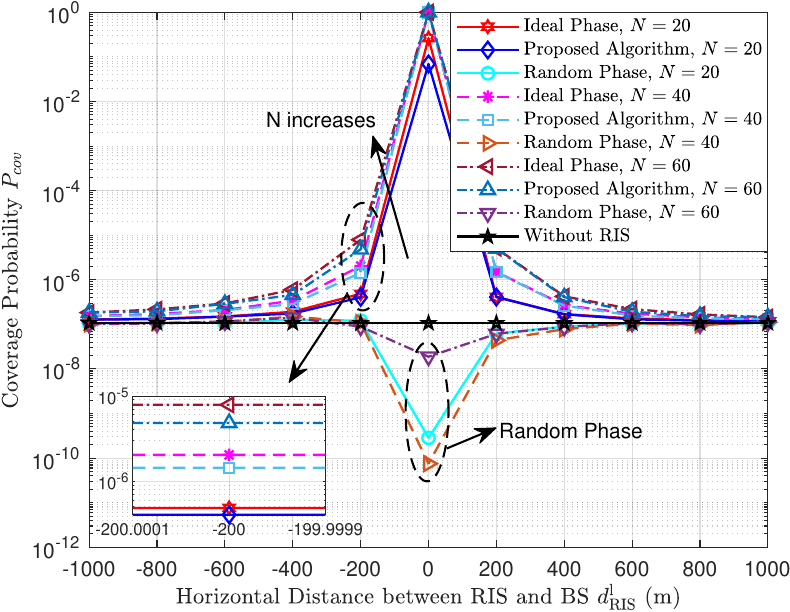}}
  \caption{ \label{fig:5}Coverage probability vs. horizontal distance between RIS and BS. System parameters: $k=11,000$ m, $T=22,000$, $P = 30$ dBm, $\gamma_{th} =10$ dB, and $b=2$.}
\end{figure}

Fi{}g.~{\ref{fig:5}} illustrates coverage probability against horizontal distance $d_{\rm{RIS}}^{\rm{l}}$  between the RIS and BS under different number of RIS elements. As expected, it is observed that coverage probability of the case ideal phase and the proposed algorithm firstly increases and then decreases. It is found that the system performance of the proposed algorithm is highest when $d_{\rm{RIS}}^{\rm{l}}=0$, namely, the center of the long axis of BS elliptical coverage area. This is owing to the fact that when RIS moves towards BS, path-loss of the reflection link changes accordingly, which leads to enhanced reflected signal and the benefits of the RIS are fully utilized. This shows that the system performance is sensitive to placement of the RIS. In addition, we observe that  narrow performance gap exists between the proposed algorithm and ideal phase case when $N=40$ and $N=60$, which means that coverage probability of the proposed algorithm can approach ideal phase case with increasing number of RIS elements.

Fi{}g.~{\ref{fig:6}} illustrates coverage probability against speed $v$ of HST. It can be observed that coverage probabilities under the four schemes increase  with  speed of HST. The reason is that HST can travel longer distance with the higher speed under the same circumstances, and lead it closer to BS in the current scenario, thus channel gain is improved, which results that received power at MR is enhanced and coverage probability is thus improved. When $v$ is further increased, coverage probabilities converge to the maximum value. We can also see that the proposed algorithm outperforms the cases without RIS and random phase except for the case of ideal phase. In addition, the minimum speed satisfying coverage probability constraint \eqref{yya} for the cases of ideal phase and the proposed algorithm are all $v = 300$ km/h, while it is $v = 400$ km/h and $v = 450$ km/h for the cases with random phase and without RIS, respectively. In other words, the proposed algorithm can reduce approximately 33.3\% of speed to establish fairly good communication between BS and MR, compared with the case without RIS, which further shows that it saves energy of HST system. This is because when HST gradually moves to BS, the channel of RIS-assisted HST communication becomes better, and channel gain of $h(t)$ increases. Under the same conditions, the driving distance required for HST to meet coverage probability constraint \eqref{yya} is shorter, thus the required speed is lower, and conversely, a faster speed is required in the case of without RIS deployment.
\begin{figure}[!t] 
	\centering
	{\includegraphics[scale=0.6]{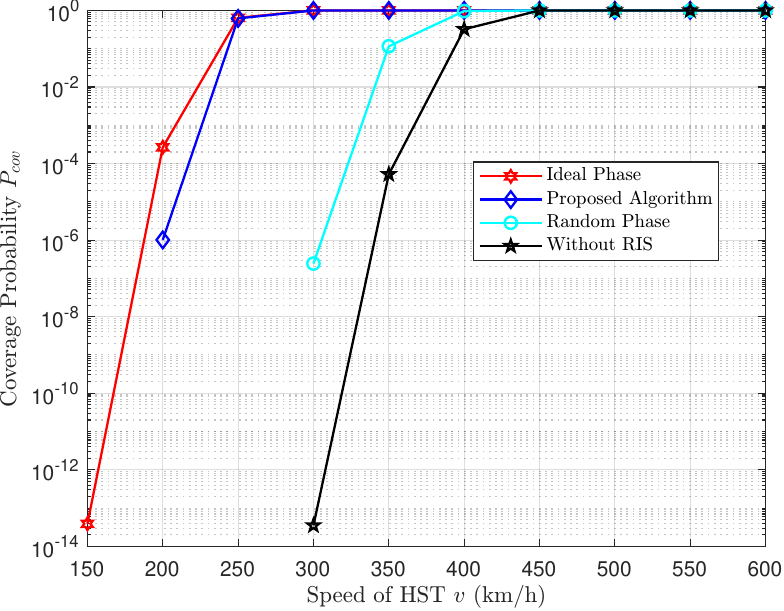}}
	\caption{ \label{fig:6}Coverage probability vs. speed of HST. System parameters: $k=7,000$ m, $T=14,000$, $P = 30$ dBm, $\gamma_{th} =10$ dB, $N$ = 50, and $b=2$.}
\end{figure}

\begin{figure}[!t] 
  \centering
  {\includegraphics[scale=0.6]{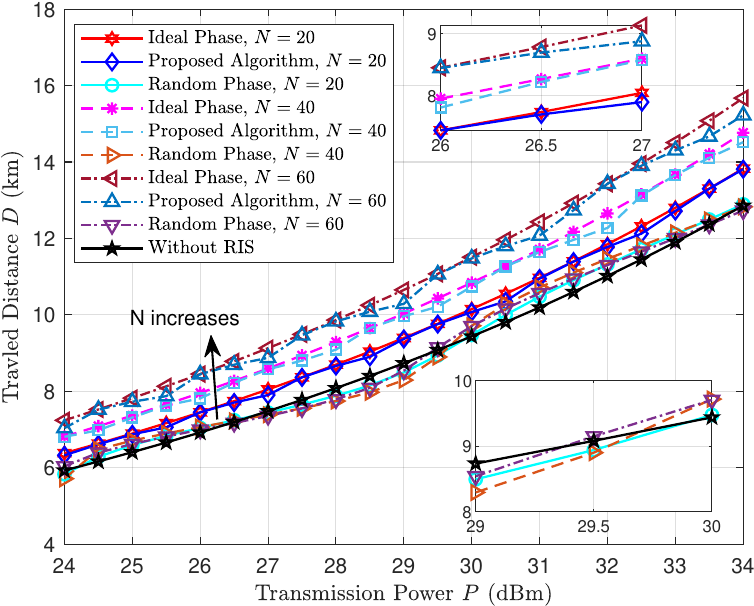}}
  \caption{ \label{fig:7}Travel distance vs. transmission power. System parameters: $k = 11,000$ m, $T=22,000$, $\gamma_{th} = 10$, dB and $b=2$.}
\end{figure}

\vspace{-8mm}
\subsection{Travel Distance}
In this subsection,  impact of the four schemes on travel distance is analyzed, which is calculated by \eqref{eq:P1} with constraint $P_{\rm{cov}}\left( t \right) \geq P_{\rm{th}}$. Fi{}g.~{\ref{fig:7}} shows travel distance against transmission power $P$ under different number of RIS elements. It can be observed that travel distance under the four schemes increases with transmission power, and the proposed algorithm outperforms the cases without RIS and random phase. The travel distance of the random phase case fluctuates with increasing number of RIS elements. The reason is that the distance of the reflection link that generates useful signal is random, however, the useful reflection signal is not necessarily strong unless RIS element phases are optimized, thus the travel distance is not necessarily increased, which can also provide an important insight that there is no advantage of deploying RIS unless RIS elements phases are optimized. We further observe that travel distance increases with number of RIS elements. This is because more RIS elements in system result that more signal paths and energy can be reflected to enhance signal quality at MR. For given $P=30$ dBm, when $N=20$, $N=40$ and $N=60$, we can observe that travel distances of the proposed algorithm more number of RIS are $6.93\%$, $13.72\%$ and $20.14\%$ higher that the case without RIS, respectively. This shows that introducing RIS can well enhance HST communication coverage. In addition, travel distance of the proposed algorithm is not strictly linearly increasing with transmission power, therefore, the degree of performance improvement is not the same under different transmission power.

\begin{figure}[!t] 
	\centering
	{\includegraphics[scale=0.6]{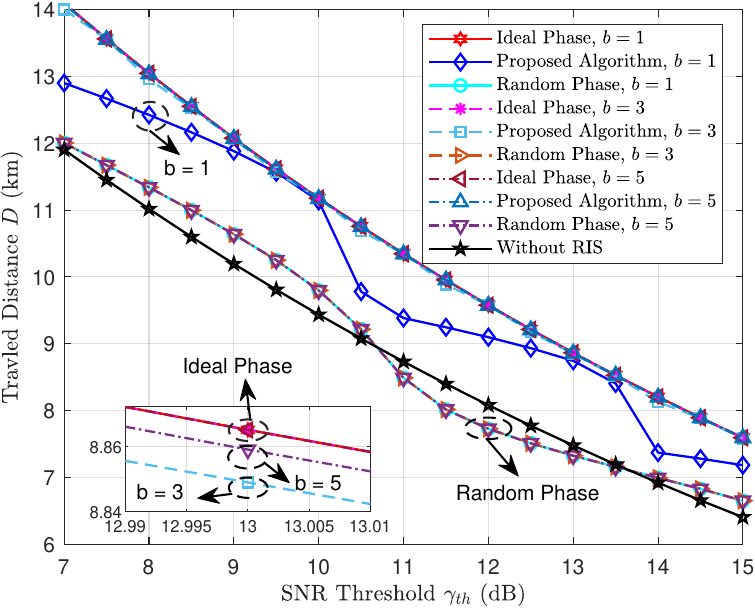}}
	\caption{ \label{fig:8}Travel distance vs. SNR threshold. System parameters: $k=8,000$ m, $T=16,000$, $P = 30$ dBm, and $N = 50$.}
\end{figure}
Fi{}g.~{\ref{fig:8}} shows travel distance against SNR threshold $\gamma_{th}$ under different number of RIS quantization bits. It is observed that travel distance of the four schemes decreases with  SNR threshold. This is because improving SNR threshold requirements leads to more communication outage events and lower coverage probability, thus the travel distance decreases accordingly. In addition, it is observed that the travel distance of the proposed algorithm increases with number of RIS quantization bits, and it is effected by $\gamma_{th}$. It is further observed that travel distance of the proposed algorithm has some fluctuations when $b=1$. This is because RIS can concentrate  incident signals with  more precise directions by using higher phase resolution $b$, and thus the case with higher $b$ can achieve more performance gain. This implies that performance of the proposed algorithm can better approach the  ideal phase case by utilizing higher phase resolution.

In Fi{}g.~{\ref{fig:9}}, we investigate  impact of horizontal distance $d_{\rm{RIS}}^{\rm{l}}$  between RIS and BS on travel distance under different number of RIS elements.  We plot travel distance curves with $d_{\rm{RIS}}^{\rm{l}}$ from ranging  $-1,000$ m to $1,000$ m for the four schemes. As expected, it is observed that travel distance of ideal phase and the proposed algorithm firstly increases and then decreases, and has the highest travel distance when $d_{\rm{RIS}}^{\rm{l}}=0$. This is because when RIS moves towards BS, path-loss of the reflection link changes accordingly, which leads to enhanced reflected signal at MR. When $d_{\rm{RIS}}^{\rm{l}}=200$ m, it is observed in Fi{}g.~{\ref{fig:14}} that the proposed algorithm can improved travel distance compared with the case without RIS by $0.45\%$, $0.89\%$ and $1.32\%$ for $N = 20$, $N = 40$ and $N = 60$, respectively, while which are  $7.31\%$, $11.61\%$ and $21.98\%$ higher than the case without RIS when $d_{\rm{RIS}}^{\rm{l}}=0$ m, respectively. It reveals the fact that the performance gain improvement is small when RIS is deployed far away from BS, and a larger number of RIS elements will be required for performance improvement. When RIS is close to the center of BS coverage area, the system performance can be significantly improved, and a small number of RIS components can achieve fairly good performance. This can not only improve coverage performance but also reduce hardware manufacturing cost.
\begin{figure}[!t] 
  \centering
  {\includegraphics[scale=0.6]{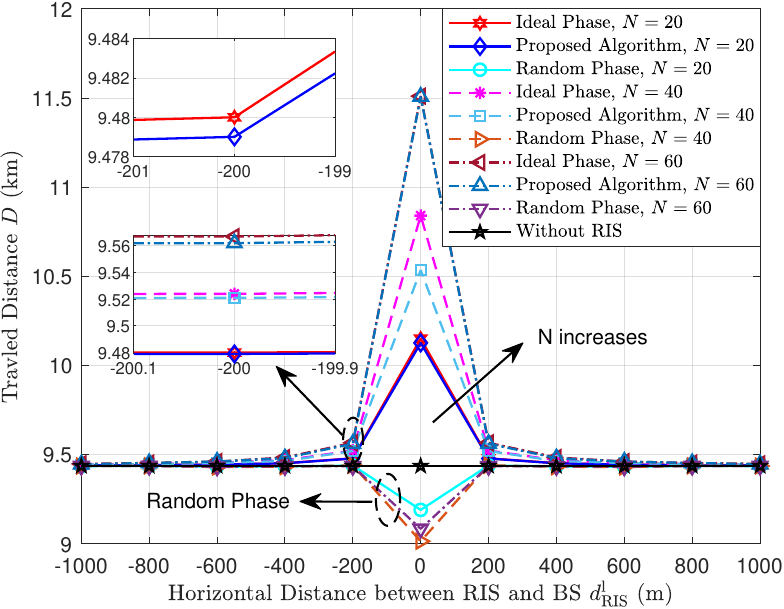}}
  \caption{ \label{fig:9}Travel distance vs. horizontal distance between RIS and BS. System parameters: $k=11,000$ m, $T=22,000$, $P = 30$ dBm, $\gamma_{th} =10$ dB, and $b=2$.}
\end{figure}

\begin{figure}[!t] 
  \centering
  {\includegraphics[scale=0.6]{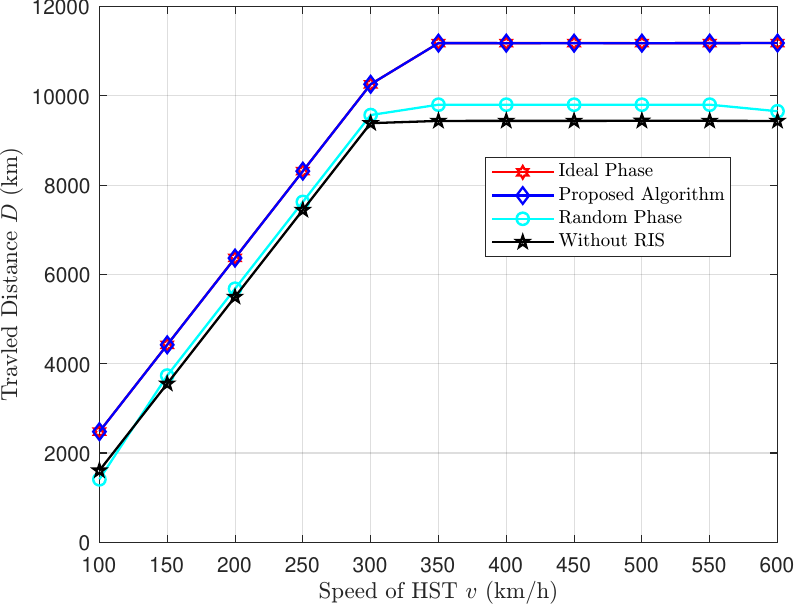}}
  \caption{ \label{fig:10}Travel distance vs. speed of HST. System parameters: $k=7,000$ m, $T=14,000$, $P = 30$ dBm, $\gamma_{th} =10$ dB, $N$ = 50, and $b=2$.}
\end{figure}
Fi{}g.~{\ref{fig:10}} illustrates travel distance against speed $v$ of HST. It is observed that travel distance increases with speed of HST. When $v$ is further increased, travel distance converges to the maximum value. This is because within the same time, it takes more time for HST to reach the position which can satisfy coverage probability constraint with lower speed, resulting in a shorter travel distance under coverage probability constraint. When the speed becomes higher, the time it takes is shorter and the travel distance will become longer. When the speed increases to a certain value, the time it takes for HST to run the distance that satisfies coverage probability constraint will not exceed set time range, that is, travel distance of the current conditions reaches the upper limit, thus, it will not  further increase with speed. In addition, when speed does not exceed $300$ km/h, it is observed that the proposed algorithm can improved travel distance compared with the case without RIS by 9.28\%, while it is 18.44\% higher than the case without RIS when speed exceeds $350$ km/h. It reveals that the performance gain is more obvious at a higher speed.

\vspace{-3mm}
\subsection{Transmission Rate}
In this subsection, we investigate impact of the four schemes on transmission rate. The transmission rate in time slot $t$ can be written as
\begin{equation}
  \begin{array}{l} \label{eq:rate}
    C \left( t \right) \\
    =B\log_{10}\left(1 + \frac{P\left|h_{\rm{BM}} \left( t \right) + \sum_{n=1}^N{h^{n}_{{\rm{RM}} }\left( t \right) e^{j\theta _n\left( t \right)}h^{n}_{{\rm{BR}} }}\left( t \right)  \right|^2 }{\sigma^2}  \right) \\
    = B\log_{10}\left( 1 + \gamma\left( t \right) \right).
  \end{array}
\end{equation}

\begin{figure}[!t] 
  \centering
  {\includegraphics[scale=0.6]{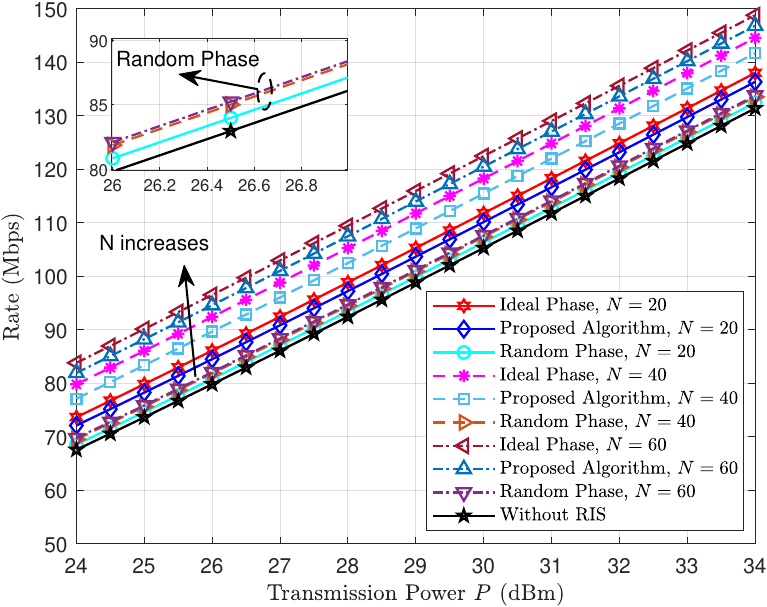}}
  \caption{ \label{fig:11}Rate vs. BS transmission power. System parameters: $k = 11,000$ m, $T=22,000$, $\gamma_{th} = 10$, dB and $b=2$.}
\end{figure}

Fi{}g.~{\ref{fig:11}} illustrates transmission rate against transmission power $P$ under different numbers of RIS elements. It is observed that transmission rate increases with transmission power. This is because SNR increases with transmission power and thus results that transmission rate increases. Moreover, it is observed that transmission rate increases with number of RIS elements. Compared with the case without RIS, transmission rate of the proposed algorithm achieves a gain of $3.75\%$, $7.84\%$ and $11.73\%$ when  $N = 20$, $N = 40$ and $N = 60$, respectively for $P=34$ dBm, whereas it can achieves up to $6.71\%$, $14.13\%$ and $21.15\%$ for $P=24$ dBm, respectively. This is because increasing number of RIS elements has higher impact on the improvement of channel gain compared with increasing transmission power. It also reveals that using RIS can achieve improved system performance and reduce power consumption.

Fi{}g.~{\ref{fig:12}} shows transmission rate against SNR threshold $\gamma_{th}$ under different numbers of RIS quantization bits. As excepted, it is observed that transmission rate is not affected by increasing transmission power. We further observe that transmission rate of the proposed algorithm increases with number of RIS quantization bits $b$. The proposed algorithm can generally achieve $7.61\%$, $12.12\%$ and $13.17\%$ transmission rate improvement, respectively, compared with the case without RIS when $b = 1$, $b = 3$ and $b = 5$. This means that transmission rate is significantly improved with higher phase resolution $b$. Increasing $b$ from $1$ to $3$ and to $5$, the proposed algorithm achieves a gain of $4.75\%$ and $5.14\%$ respectively, which indicates that system performance generally increases slowly with $b$. This reveals that the system performance tends to be saturated when number of quantization bits exceeds a certain value.
\begin{figure}[!t] 
  \centering
  {\includegraphics[scale=0.6]{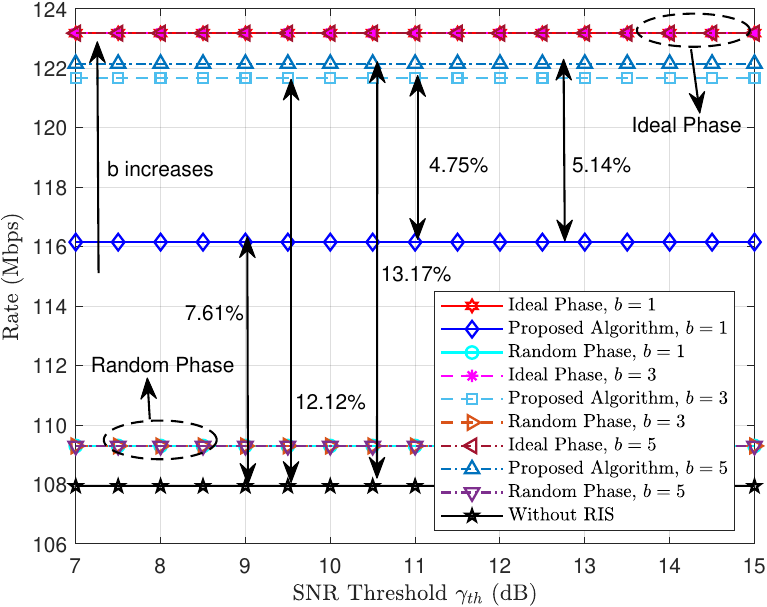}}
  \caption{ \label{fig:12}Rate vs. SNR threshold. System parameters: $k=8,000$ m, $T=16,000$, $P = 30$ dBm, and $N = 50$.}
\end{figure}

Fi{}g.~{\ref{fig:13}} illustrates transmission rate against horizontal distance $d_{\rm{RIS}}^{\rm{l}}$ between RIS and BS. It is observed that transmission rate of using ideal phase and the proposed algorithm firstly increases and then decreases, and have the highest travel distance when $d_{\rm{RIS}}^{\rm{l}}=0$. Transmission rate for the cases ideal phase and the proposed algorithm increases with  number of RIS elements. The reason is that the channel gain is improved with increasing number of RIS elements and RIS is close to the center of BS coverage area. Under the default setting of $d_{\rm{RIS}}^{\rm{l}}=0$, the proposed algorithm can achieve $7.31\%$, $11.67\%$ and $21.98\%$ transmission rate improvement, respectively, compared with the case without RIS when $N = 20$, $N = 40$, and $N = 60$.
\begin{figure}[!t] 
  \centering
  {\includegraphics[scale=0.6]{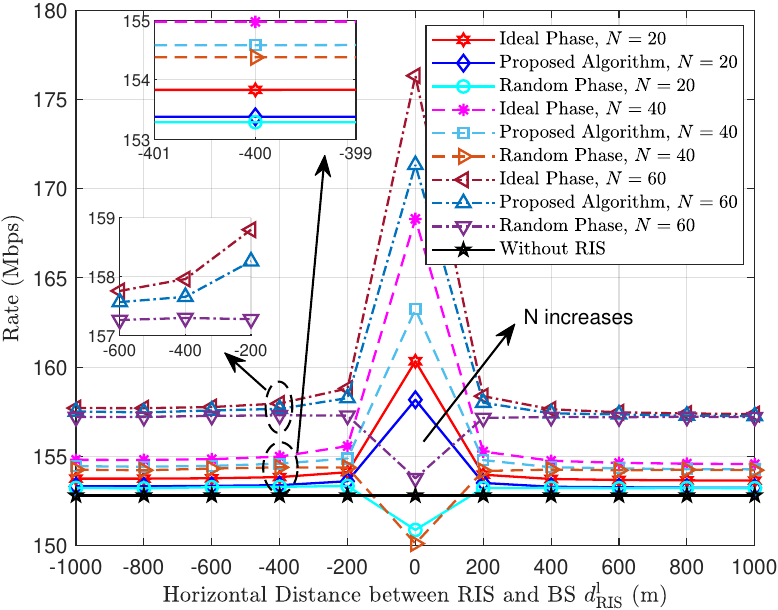}}
  \caption{ \label{fig:13}Rate vs. horizontal distance between RIS and BS. System parameters: $k=11,000$ m, $T=22,000$, $P = 30$ dBm, $\gamma_{th} =10$ dB, and $b=2$.}
\end{figure}

\section{Conclusions}
In this paper, we investigate coverage performance and travel distance maximization of downlink SISO RIS-assisted HST communication system. An closed-form expression of coverage probability is derived according to CSI, and travel distance maximization problem is solved by jointly design of RIS phase and RIS placement. We have analyzed the impacts of some key system parameters including transmission power, SNR threshold, number of RIS elements, number of RIS quantization bits, horizontal distance between BS and RIS, and speed of HST on coverage probability, travel distance, and transmission rate of HST communications. Numerical results have demonstrated that improved coverage performance and larger travel distance can be achieved by deployment of RIS and  efficiency of the proposed algorithm in verified in terms of coverage probability, travel distance, and transmission rate. The results in this paper can serve as a guidance for RIS-assisted HST communication coverage analysis and system design.

\section*{Appendix A \\ The proof of Theorem 1}  
Substituting \eqref{eq:hbm}, \eqref{eq:hbr} and \eqref{eq:hrm} into \eqref{eq:h}, we have \eqref{eq:h2},
\begin{figure*}[!h]
  \begin{align}\label{eq:h2}
      h\left( t \right) &= \rho _{\mathrm{BM}}\bar{h}_{ \rm{BM}}\left( t \right) + \varrho _{{\mathrm{BM}}}\tilde{h}_{ \rm{BM}}\left( t \right)  + \sum_{n=1}^N{\rho _{\mathrm{RM}}\rho _{\mathrm{BR}}\bar{h}^{n}_{{\rm{RM}} }\left( t \right) e^{j\theta _n\left( t \right)}\bar{h}^{n}_{{\rm{BR}} }} + \sum_{n=1}^N{\rho _{\mathrm{RM}}\varrho _{{\mathrm{BR}}}\bar{h}^{n}_{{\rm{RM}} }\left( t \right) e^{j\theta _n\left( t \right)}\tilde{h}^{n}_{{\rm{BR}} }}\left( t \right) \\ 
      &+ \sum_{n=1}^N{\varrho _{\mathrm{RM}}\rho _{{\mathrm{BR}}}\tilde{h}^{n}_{{\rm{RM}} }\left( t \right) e^{j\theta _n\left( t \right)}\bar{h}^{n}_{{\rm{BR}} }} + \sum_{n=1}^N{\varrho _{\mathrm{RM}}\varrho _{{\mathrm{BR}}}\tilde{h}^{n}_{{\rm{RM}} }\left( t \right) e^{j\theta _n\left( t \right)}\tilde{h}^{n}_{{\rm{BR}} }}\left( t \right) \nonumber \\
      &= h_1\left( t \right) + h_2\left( t \right)+ h_3\left( t \right)+ h_4\left( t \right)+ h_5\left( t \right) +h_6\left( t \right),  \nonumber 
  \end{align}
  \hrulefill 
\end{figure*}
where $\rho_{\mathrm{BM}}=\sqrt{\frac{\kappa _{{\mathrm{BM}}}}{\kappa _{{\mathrm{BM}}}+1}}$, $\varrho _{{\mathrm{BM}}}=\sqrt{\frac{1}{\kappa _{\mathrm{BM}}+1}}$, $\rho _{\mathrm{BR}}=\sqrt{\frac{\kappa _{{\mathrm{BR}}}}{\kappa _{{\mathrm{BR}}}+1}}$, $\varrho _{{\mathrm{BR}}}=\sqrt{\frac{1}{\kappa _{{\mathrm{BR}}}+1}}$, $\rho _{\mathrm{RM}}=\sqrt{\frac{\kappa _{\mathrm{RM}}}{\kappa _{\mathrm{RM}}+1}}$, $\varrho _{{\mathrm{RM}}}=\sqrt{\frac{1}{\kappa _{\mathrm{RM}}+1}}$, $h_1\left( t \right) = \rho _{\mathrm{BM}}\bar{h}_{ \rm{BM}}\left( t \right)$, $h_2\left( t \right) = \varrho _{{\mathrm{BM}}}\tilde{h}_{ \rm{BM}}\left( t \right)$, $h_3\left( t \right) = \sum_{n=1}^N{\rho _{\mathrm{RM}}\rho _{\mathrm{BR}}\bar{h}^{n}_{{\rm{RM}} }\left( t \right) e^{j\theta _n\left( t \right)}\bar{h}^{n}_{{\rm{BR}}}}$, $h_4\left( t \right) = \sum_{n=1}^N{\rho _{\mathrm{RM}}\varrho _{{\mathrm{BR}}}\bar{h}^{n}_{{\rm{RM}} }\left( t \right) e^{j\theta _n\left( t \right)}\tilde{h}^{n}_{{\rm{BR}} }}\left( t \right)$, $h_5\left( t \right) = \sum_{n=1}^N{\varrho _{\mathrm{RM}}\rho _{{\mathrm{BR}}}\tilde{h}^{n}_{{\rm{RM}} }\left( t \right) e^{j\theta _n\left( t \right)}\bar{h}^{n}_{{\rm{BR}}}}$ and $h_6\left( t \right) = \sum_{n=1}^N{\varrho _{\mathrm{RM}}\varrho _{{\mathrm{BR}}}\tilde{h}^{n}_{{\rm{RM}} }\left( t \right) e^{j\theta _n\left( t \right)}\tilde{h}^{n}_{{\rm{BR}} }}\left( t \right)$.

Note that, the LoS components of  BS-MR link,  RIS-MR link and  BS-RIS link depend on the corresponding link distances. For a given location, components  $h_1\left( t \right)$ and  $h_3\left( t \right)$ of \eqref{eq:h2} turn to be deterministic. Since the NLoS component $\tilde{h}_{ \rm{BM}}\left( t \right)$ follows a complex Gaussian distribution with zero mean and variance $PL _{\rm{NLoS}}^{BM}\left( t \right)$, $\tilde{h}_{ \rm{RM}}\left( t \right)$ and $\tilde{h}_{ \rm{BR}}\left( t \right)$ follow a complex Gaussian distribution with zero mean and variance $PL _{\rm{NLoS,RM}}^{n}\left( t \right)$, $PL _{\rm{NLoS,BR}}^{n}\left( t \right)$, respectively, the parts $h_2\left( t \right)$, $h_4\left( t \right)$, $h_5\left( t \right)$ and $h_6\left( t \right)$ of \eqref{eq:h2} also follow a Gaussian distribution.

The expectation of $h\left( t \right)$ can be written as in \eqref{eq:eh2},
\begin{figure*}[!h]
  \begin{align} \label{eq:eh2}
    &\mu_h \left(t\right) \ \triangleq \mathbb{E}\left\{ h\left( t \right)\right\} = \mathbb{E}\left\{ \rho _{\mathrm{BM}}\bar{h}_{ \rm{BM}}\left( t \right) \right\} + \mathbb{E}\left\{ \varrho _{{\mathrm{BM}}}\tilde{h}_{ \rm{BM}}\left( t \right) \right\}+ \mathbb{E}\left\{ \sum_{n=1}^N{\rho _{\mathrm{RM}}\rho _{\mathrm{BR}}\bar{h}^{n}_{{\rm{RM}} }\left( t \right) e^{j\theta _n\left( t \right)}\bar{h}^{n}_{{\rm{BR}} }} \right\}  \\ \nonumber
  &+ \mathbb{E}\left\{ \sum_{n=1}^N{\rho _{\mathrm{RM}}\varrho _{{\mathrm{BR}}}\bar{h}^{n}_{{\rm{RM}} }\left( t \right) e^{j\theta _n\left( t \right)}\tilde{h}^{n}_{{\rm{BR}} }}\left( t \right) \right\} + \mathbb{E}\left\{  \sum_{n=1}^N{\varrho _{\mathrm{RM}}\rho _{{\mathrm{BR}}}\tilde{h}^{n}_{{\rm{RM}} }\left( t \right) e^{j\theta _n\left( t \right)}\bar{h}^{n}_{{\rm{BR}} }} \right\} \\ \nonumber
  &+ \mathbb{E}\left\{ \sum_{n=1}^N{\varrho _{\mathrm{RM}}\varrho _{{\mathrm{BR}}}\tilde{h}^{n}_{{\rm{RM}} }\left( t \right) e^{j\theta _n\left( t \right)}\tilde{h}^{n}_{{\rm{BR}} }}\left( t \right) \right\} \\ \nonumber
  &=\mathbb{E}\left\{ \rho _{\mathrm{BM}}\bar{h}_{ \rm{BM}}\left( t \right) \right\} + \mathbb{E}\left\{ \sum_{n=1}^N{\rho _{\mathrm{RM}}\rho _{\mathrm{BR}}\bar{h}^{n}_{{\rm{RM}} }\left( t \right) e^{j\theta _n\left( t \right)}\bar{h}^{n}_{{\rm{BR}} }} \right\}  = \rho _{\mathrm{BM}}\bar{h}_{ \rm{BM}}\left( t \right) +\sum_{n=1}^N{\rho _{\mathrm{RM}}\rho _{\mathrm{BR}}\bar{h}^{n}_{{\rm{RM}} }\left( t \right) e^{j\theta _n\left( t \right)}\bar{h}^{n}_{{\rm{BR}} }} \\  \nonumber
  &=\rho _{\rm{BM} }\sqrt{PL _{\rm{BM} }\left(t\right)}e^{-j\theta ^{ \rm{BM}}\left(t\right)} +\sum_{n=1}^N\rho_{\rm{RM} }\rho _{\rm{BR} }\sqrt{PL _{ \rm{RM}}^{n}\left(t\right)}\sqrt{PL _{ \rm{BR}}^{n}}e^{j\left( \theta _n\left(t\right)-\theta_{\rm{RM}}^{n}\left(t\right)-\theta_{\rm{BR}}^{n} \right)}, \nonumber
  \end{align}
  \hrulefill 
\end{figure*}

and the variance of $h\left( t \right)$  is derived as in \eqref{eq:hsigama},
\begin{figure*}[!ht]
  \begin{align} \label{eq:hsigama}
    \sigma _{h}^{2}\left(t\right) & \triangleq \rm{var}\left\{ h\left( t \right)\right\} = \rm{var}\left\{ \rho _{\mathrm{BM}}\bar{h}_{ \rm{BM}}\left( t \right) \right\} + \rm{var}\left\{ \varrho _{{\mathrm{BM}}}\tilde{h}_{ \rm{BM}}\left( t \right) \right\}  + \rm{var}\left\{ \sum_{n=1}^N{\rho _{\mathrm{RM}}\rho _{\mathrm{BR}}\bar{h}^{n}_{{\rm{RM}} }\left( t \right) e^{j\theta _n\left( t \right)}\bar{h}^{n}_{{\rm{BR}} }} \right\}  \\ \nonumber
   & + \rm{var}\left\{ \sum_{n=1}^N{\rho _{\mathrm{RM}}\varrho _{{\mathrm{BR}}}\bar{h}^{n}_{{\rm{RM}} }\left( t \right) e^{j\theta _n\left( t \right)}\tilde{h}^{n}_{{\rm{BR}} }}\left( t \right) \right\} + \rm{var}\left\{  \sum_{n=1}^N{\varrho _{\mathrm{RM}}\rho _{{\mathrm{BR}}}\tilde{h}^{n}_{{\rm{RM}} }\left( t \right) e^{j\theta _n\left( t \right)}\bar{h}^{n}_{{\rm{BR}} }} \right\} \\ \nonumber
   & + \rm{var}\left\{ \sum_{n=1}^N{\varrho _{\mathrm{RM}}\varrho _{{\mathrm{BR}}}\tilde{h}^{n}_{{\rm{RM}} }\left( t \right) e^{j\theta _n\left( t \right)}\tilde{h}^{n}_{{\rm{BR}} }}\left( t \right) \right\} \\  \nonumber
  &=\rm{var}\left\{ \varrho _{{\mathrm{BM}}}\tilde{h}_{ \rm{BM}}\left( t \right) \right\} + \rm{var}\left\{ \sum_{n=1}^N{\varrho _{\mathrm{RM}}\varrho _{{\mathrm{BR}}}\tilde{h}^{n}_{{\rm{RM}} }\left( t \right) e^{j\theta _n\left( t \right)}\tilde{h}^{n}_{{\rm{BR}} }}\left( t \right) \right\} \\  \nonumber
  &= \varrho _{{\mathrm{BM}}}^2\rm{var}\left\{ \tilde{h}_{ \rm{BM}}\left( t \right) \right\} +\varrho _{\mathrm{RM}}^2\varrho _{{\mathrm{BR}}}^2\rm{var}\left\{ \sum_{n=1}^N{\tilde{h}^{n}_{{\rm{RM}} }\left( t \right) e^{j\theta _n\left( t \right)}\tilde{h}^{n}_{{\rm{BR}} }}\left( t \right) \right\} \\  \nonumber
  &=\varrho_{\rm{BM}}^2PL _{\rm{NLoS}}^{\rm{BM}}\left(t\right) +\sum_{n=1}^{N}\varrho_{\rm{RM}}^2\varrho_{\rm{BR}}^2PL _{\rm{NLoS,RM}}^{n}\left(t\right)PL _{{\rm{NLoS,BR}}}^{n}\left(t\right), \nonumber
  \end{align}
    \hrulefill
  \end{figure*}
thus, $h\left( t \right)$ is proved to follow a complex-valued Gaussian distribution, $h\left(t\right) \sim \mathcal{C} \mathcal{N} \left( \mu_h\left(t\right),\sigma_{h}^2\left(t\right) \right)$. This completes the proof.

\vspace{-3.5mm}
\section*{Appendix B \\ The proof of Theorem 2}
Due to the Gaussian channel proved above, $h\left(t\right) \sim \mathcal{C} \mathcal{N} \left( \mu_h\left(t\right),\sigma_{h}^2\left(t\right) \right) $. Therefore, $\frac{  \left| h\left(t\right) \right|^2 }{\sigma_{h}^2\left(t\right)}$ follows the non-central chi-squared distribution, i.e.,  $\chi ^2(\nu,\zeta \left( t \right)) $, with the degree of freedom $\nu = 2$, and the  non-centrality parameter is as in \eqref{eq:notcenter}.
\begin{figure*}[!ht]
  \begin{equation} \label{eq:notcenter}
    \zeta \left( t \right) = \frac{ \left| \mu_h\left(t\right)\right|^2 }{\sigma_{h}^2\left(t\right)} = \frac{\left|\rho _{\rm{BM} }\sqrt{PL _{\rm{BM} }\left(t\right)}e^{-j\theta _{ \rm{BM}}\left(t\right)} +\sum_{n=1}^N\rho_{\rm{RM} }\rho _{\rm{BR} }\sqrt{PL _{ \rm{RM}}^{n}\left(t\right)}\sqrt{PL _{ \rm{BR}}^{n}} e^{j\left( \theta _n\left( {t} \right) -\theta _{\mathrm{RM}}^{n}\left( {t} \right) -\theta _{\mathrm{BR}}^{n} \right)} \right|^2}{\varrho_{\rm{BM}}^2PL _{\rm{NLoS}}^{\rm{BM}}\left(t\right) +\sum_{n=1}^{N}\varrho_{\rm{RM}}^2\varrho_{\rm{BR}}^2PL _{\rm{NLoS,RM}}^{n}\left(t\right)PL _{{\rm{NLoS,BR}}}^{n}\left(t\right) }.
  \end{equation}
  \hrulefill
\end{figure*}

\vspace{-3mm}
With the corresponding CDF of $\chi ^2(\nu,\zeta \left( t \right)) $, $P_{\rm{out}}\left( t \right)$ defined in \eqref{eq:pcov} is given by
\begin{align}
  P_{\rm{out}}\left( t \right) &= \mathrm{Pr}\left( \left| h\left( t \right) \right|^2<\frac{\gamma _{\rm{th}}}{\bar{\gamma}} \right)  = \mathrm{Pr}\left(  \frac{\left| h\left( t \right) \right|^2 }{\sigma_{h}^{2}\left( t \right)} <\frac{\gamma _{\rm{th}}}{\bar{\gamma}\sigma_{h}^{2}\left( t \right)} \right) \\ \nonumber
  &=1 - Q_{\frac{\nu}{2}}\left(\sqrt{\zeta\left( t \right)}, \sqrt{\gamma_0\left( t \right)}\right),
\end{align}
where $\gamma_0 = \frac{\gamma_{th}}{\bar{\gamma}\sigma _{h}^{2}\left( t \right)}$, and $Q_m \left(a,b \right)$ is the Marcum Q-function defined in \cite{t8}. Thus, coverage probability defined in \eqref{eq:pcov} can be rewrite as 
\begin{align}
  P_{\rm{cov}} &= 1 - P_{\rm{out}} = Q_{1}\left(\sqrt{\zeta\left( t \right)}, \sqrt{\gamma_0\left( t \right)}\right).
\end{align}
This completes the proof.



%
\vspace{-4mm}

%



\begin{IEEEbiography}[{\includegraphics[width=1in,height=1.25in,clip,keepaspectratio]{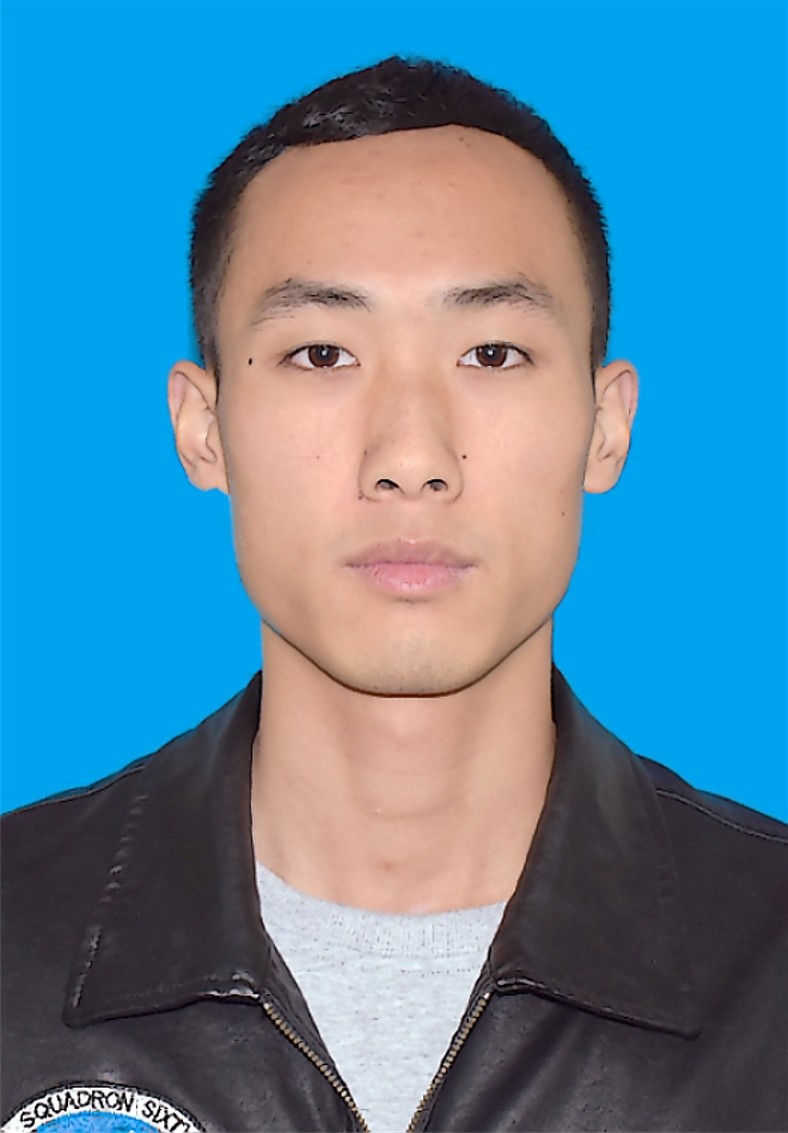}}]{Changzhu Liu}(Graduate Student Member, IEEE) received the B.Eng. degree in communication engineering from Huaihua University, Huaihua, Hunan, China, in 2017, the M.Sc. degree from the Chongqing University of Posts and Telecommunications (CQUPT), Chongqing, China, in 2020. He i  currently pursuing the Ph.D.degree with the State
Key Laboratory of Advanced Rail Autonomous Operation, Beijing Jiaotong University, Beijing, China. His current research interests include resource management of reconfigurable intelligent surface communications.
\vspace{-10mm}
\end{IEEEbiography}

\begin{IEEEbiography}[{\includegraphics[width=1in,height=1.25in,clip,keepaspectratio]{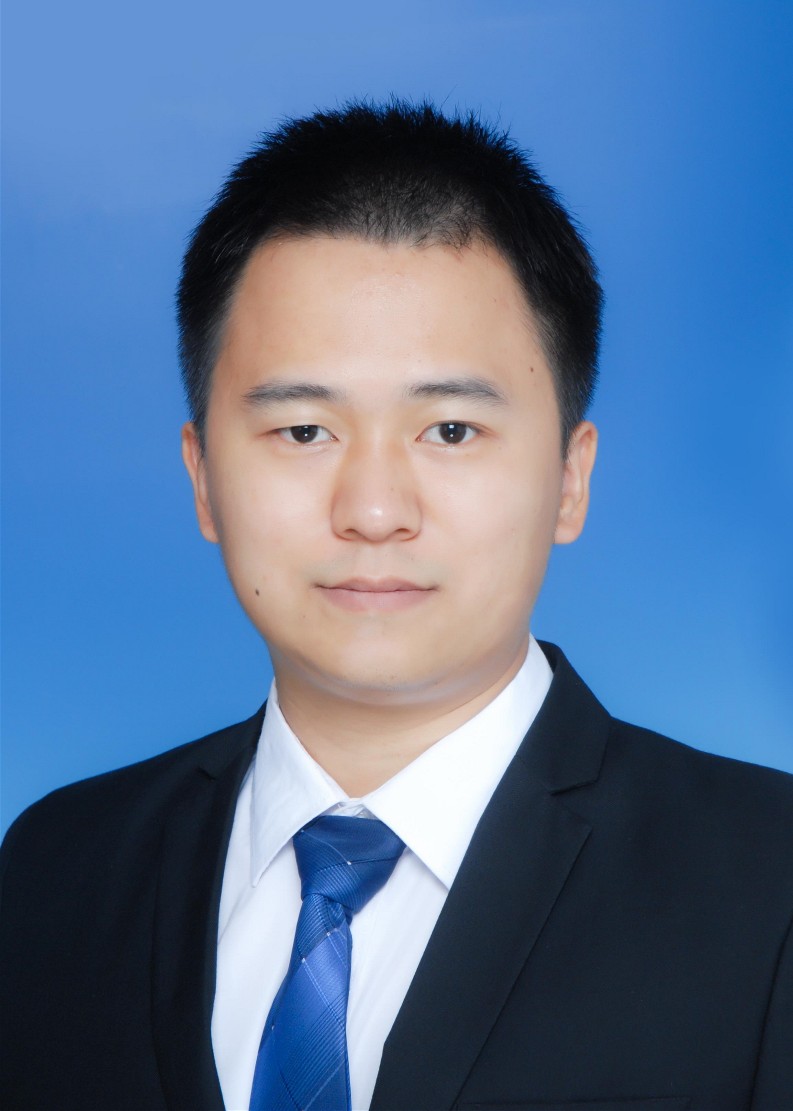}}]{Ruisi He}
(Senior Member, IEEE) received the B.E. and Ph.D. degrees from Beijing Jiaotong University (BJTU), Beijing, China, in 2009 and 2015, respectively.
  
Dr. He is currently a Professor with the State Key Laboratory of Advanced Rail Autonomous Operation and the School of Electronics and Information Engineering, BJTU. Dr. He has been a Visiting Scholar in Georgia Institute of Technology, USA, University of Southern California, USA, and Universit\'e Catholique de Louvain, Belgium. His research interests include wireless propagation channels, railway and vehicular communications, 5G and 6G communications. He has authored/co-authored 8 books, 4 book chapters, more than 200 journal and conference papers, as well as several patents.
  
Dr. He has been an Editor of the IEEE Transactions on Communications, the IEEE Transactions on Wireless Communications, the IEEE Transactions on Antennas and Propagation, the IEEE Antennas and Propagation Magazine, the IEEE Communications Letters, the IEEE Open Journal of Vehicular Technology, and a Lead Guest Editor of the IEEE Journal on Selected Area in Communications and the IEEE Transactions on Antennas and Propagation. He served as the Early Career Representative (ECR) of Commission C, International Union of Radio Science (URSI). He received the URSI Issac Koga Gold Medal in 2020, the IEEE ComSoc Asia-Pacific Outstanding Young Researcher Award in 2019, the URSI Young Scientist Award in 2015, and several Best Paper Awards in IEEE journals and conferences.
\end{IEEEbiography}

\begin{IEEEbiography}[{\includegraphics[width=1in,height=1.25in,clip,keepaspectratio]{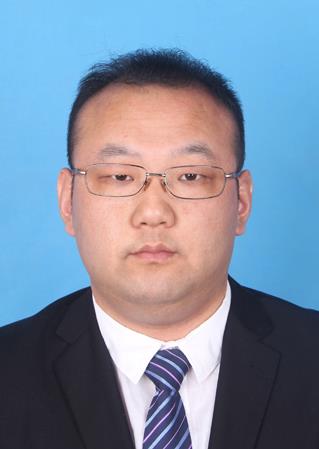}}]{Yong Niu}(Senior Member, IEEE)received the B.E. degree in electrical engineering from Beijing Jiaotong University, Beijing, China, in 2011, and the Ph.D. degree in electronic engineering from Tsinghua University, Beijing, China, in 2016.

From 2014 to 2015, he was a Visiting Scholar with the University of Florida, Gainesville, FL, USA. He is currently an Associate Professor with the State Key Laboratory of Rail Traffic Control and Safety, Beijing Jiaotong University. His research interests are in the areas of networking and communications, including millimeter wave communications, device-to-device communication, medium access control, and software-defined networks. He received the Ph.D. National Scholarship of China in 2015, the Outstanding Ph.D. Graduates and Outstanding Doctoral Thesis of Tsinghua University in 2016, the Outstanding Ph.D. Graduates of Beijing in 2016, and the Outstanding Doctorate Dissertation Award from the Chinese Institute of Electronics in 2017. He has served as Technical Program Committee Member for IWCMC 2017, VTC2018-Spring, IWCMC 2018, INFOCOM 2018, and ICC 2018. He was the Session Chair for IWCMC 2017. He was the recipient of the 2018 International Union of Radio Science Young Scientist Award.
\vspace{30mm}
\end{IEEEbiography}

\begin{IEEEbiography}[{\includegraphics[width=1in,height=1.25in,clip,keepaspectratio]{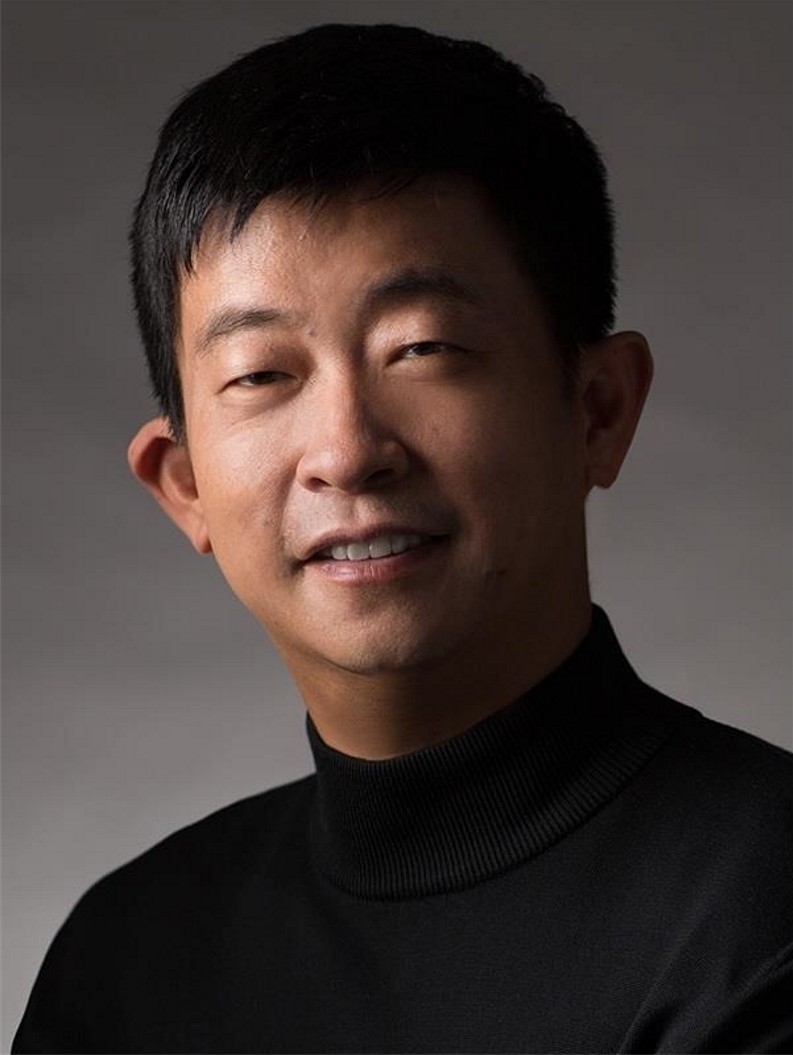}}]{Zhu Han} (Fellow, IEEE) received the B.S. degree in electronic engineering from Tsinghua University, in 1997, and the M.S. and Ph.D. degrees in electrical and computer engineering from the University of Maryland, College Park, in 1999 and 2003, respectively. 

From 2000 to 2002, he was an R\&D Engineer of JDSU, Germantown, Maryland. From 2003 to 2006, he was a Research Associate at the University of Maryland. From 2006 to 2008, he was an assistant professor at Boise State University, Idaho. Currently, he is a John and Rebecca Moores Professor in the Electrical and Computer Engineering Department as well as in the Computer Science Department at the University of Houston, Texas. Dr. Han's main research targets on the novel game-theory related concepts critical to enabling efficient and distributive use of wireless networks with limited resources. His other research interests include wireless resource allocation and management, wireless communications and networking, quantum computing, data science, smart grid, carbon neutralization, security and privacy.  Dr. Han received an NSF Career Award in 2010, the Fred W. Ellersick Prize of the IEEE Communication Society in 2011, the EURASIP Best Paper Award for the Journal on Advances in Signal Processing in 2015, IEEE Leonard G. Abraham Prize in the field of Communications Systems (best paper award in IEEE JSAC) in 2016, and several best paper awards in IEEE conferences. Dr. Han was an IEEE Communications Society Distinguished Lecturer from 2015-2018, AAAS fellow since 2019, and ACM distinguished Member since 2019. Dr. Han is a 1\% highly cited researcher since 2017 according to Web of Science. Dr. Han is also the winner of the 2021 IEEE Kiyo Tomiyasu Award (an IEEE Field Award), for outstanding early to mid-career contributions to technologies holding the promise of innovative applications, with the following citation: ``for contributions to game theory and distributed management of autonomous communication networks."
\end{IEEEbiography}

\begin{IEEEbiography}[{\includegraphics[width=1in,height=1.25in,clip,keepaspectratio]{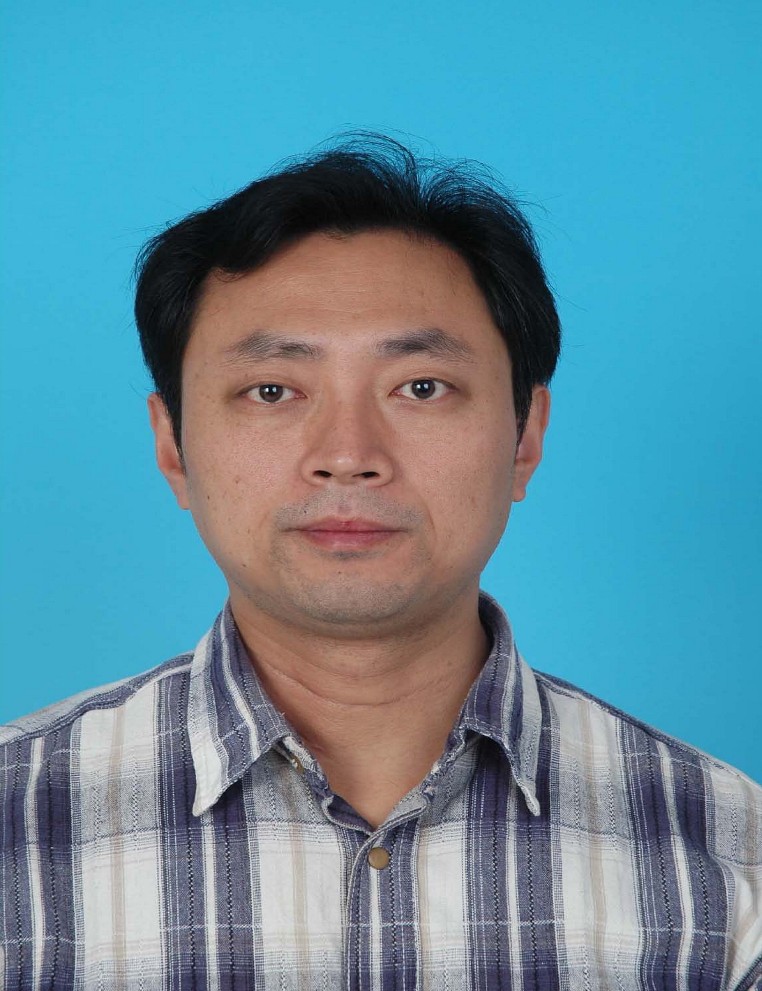}}]{Bo Ai}(Fellow, IEEE) received the master's and Ph.D. degrees from Xidian University, China. He received the honor of Excellent Postdoctoral Research Fellow from Tsinghua University in 2007. He was a Visiting Professor with the Electrical Engineering Department, Stanford University, Stanford, CA, USA, in 2015. He is currently a Full Professor with Beijing Jiaotong University, where he is also the Deputy Director of the State Key Laboratory of Rail Traffic Control and Safety and the Deputy Director of the International Joint Research Center. He is one of the directors for Beijing ``Urban Rail Operation Control System" International Science and Technology Cooperation Base, and the backbone member of the Innovative Engineering based jointly granted by the Chinese Ministry of Education and the State Administration of Foreign Experts Affairs.

He is the Research Team Leader of 26 national projects. He holds 26 invention patents. His interests include the research and applications of channel measurement and channel modeling and dedicated mobile communications for rail traffic systems. He has authored or co-authored eight books and authored over 300 academic research papers in his research area. Five papers have been the ESI highly cited paper. He has won some important scientific research prizes. He has been notified by the Council of Canadian Academies that based on the Scopus database, he has been listed as one of the Top 1\% authors in his field all over the world. He has also been feature interviewed by the \emph{IET Electronics Letters}.

Dr. Ai is a fellow of The Institution of Engineering and Technology and an IEEE VTS Distinguished Lecturer. He received the Distinguished Youth Foundation and Excellent Youth Foundation from the National Natural Science Foundation of China, the Qiushi Outstanding Youth Award by the Hong Kong Qiushi Foundation, the New Century Talents by the Chinese Ministry of Education, the Zhan Tianyou Railway Science and Technology Award by the Chinese Ministry of Railways, and the Science and Technology New Star by the Beijing Municipal Science and Technology Commission. He is an IEEE VTS Beijing Chapter Vice Chair and an IEEE BTS Xi'an Chapter Chair. He was a Co-Chair or a Session/Track Chair of many international conferences. He is an Associate Editor of the IEEE ANTENNAS AND WIRELESS PROPAGATION LETTERS, the IEEE TRANSACTIONS ON CONSUMER ELECTRONICS, and an Editorial Committee Member of the \emph{Wireless Personal Communications Journal}. He is the Lead Guest Editor of special issues on the IEEE TRANSACTIONS ON VEHICULAR TECHNOLOGY, the IEEE ANTENNAS AND PROPAGATIONS LETTERS, and the \emph{International Journal on Antennas and Propagations}.
\end{IEEEbiography}

\begin{IEEEbiography}[{\includegraphics[width=1in,height=1.25in,clip,keepaspectratio]{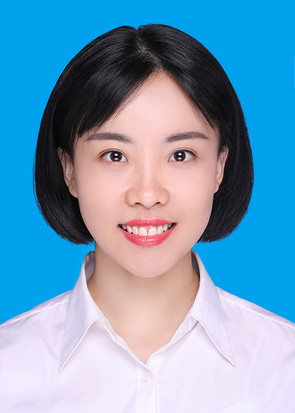}}]{Meilin Gao}(Member, IEEE) received the B.E. degree in Information Engineering from Shijiazhuang Tiedao University, Shijiazhuang, China, in 2013, and the Ph.D. degree at the State Key Laboratory of Rail Traffic Control and Safety, Beijing Jiaotong University, Beijing, China, in 2021. From August 2018 to January 2020, she worked as a Visiting Ph.D. Student with the BBCR Group, Department of Electrical and Computer Engineering, University of Waterloo, Canada. She is currently a postdoctoral fellow with the Tsinghua Space Center, Tsinghua University. Her research interests include millimeter-wave communications, wireless resource allocation, mobile edge caching and satellite communications.
\vspace{50mm}
\end{IEEEbiography}

\begin{IEEEbiography}[{\includegraphics[width=1in,height=1.25in,clip,keepaspectratio]{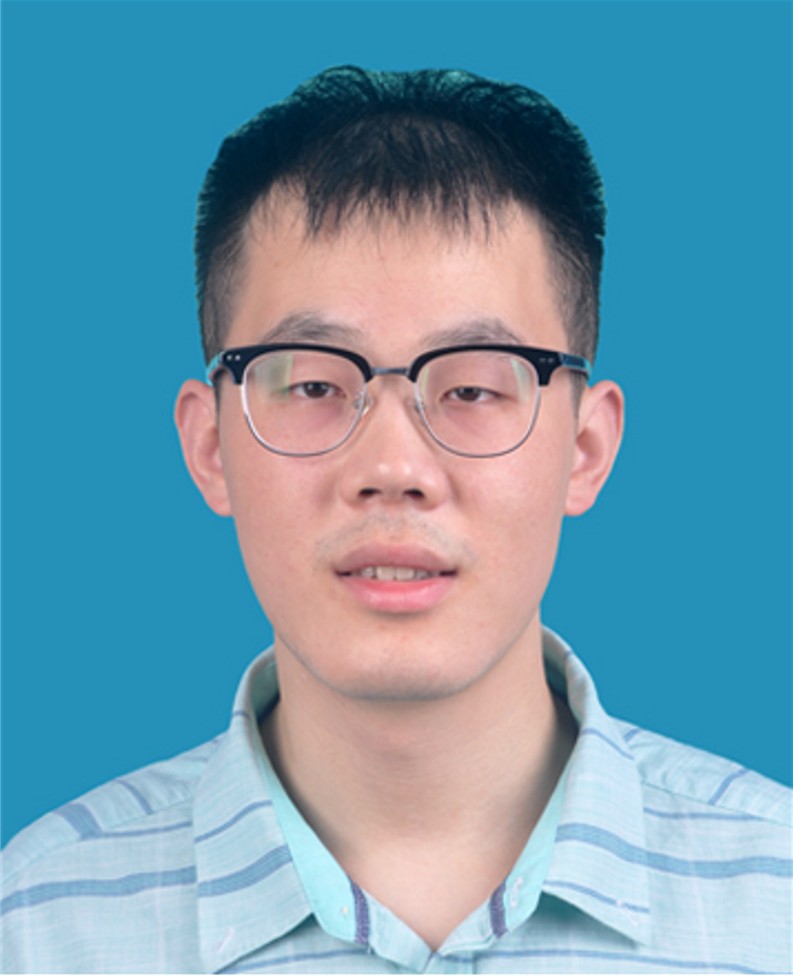}}]{Zhangfeng Ma} (Member, IEEE) received the B.S. degree in communications engineering from the Huaihua University, Huaihua, China, in 2015, the M.S. degree in electronic and communication engineering from the Chongqing University of Posts and Telecommunications (CQUPT), Chongqing, China, in 2018, and the Ph.D. degree (Hons.) in information and communication engineering from the School of Electronic and Information Engineering, Beijing Jiaotong University (BJTU), Beijing, China, in 2022.

From 2021 to 2022, he was a Visiting Ph.D. Student with the Antennas, Propagation and Millimetre-wave Systems (APMS) section, Department of Electronic Systems, Aalborg University, Aalborg, Denmark. He is currently an Associate Professor with the School of Information Science and Engineering, Shaoyang University, Shaoyang, China. His current research interests include wireless propagation channel modeling, non-stationary channel models, and unmanned aerial vehicle channel modeling. He was a recipient of the Best Student Paper Award from the IEEE Vehicular Technology Conference (VTC-Fall) in 2019, and the Outstanding Doctor's Thesis Award from the Chinese Institute of Electronics (CIE) in 2023.
\end{IEEEbiography}

\begin{IEEEbiography}[{\includegraphics[width=1in,height=1.25in,clip,keepaspectratio]{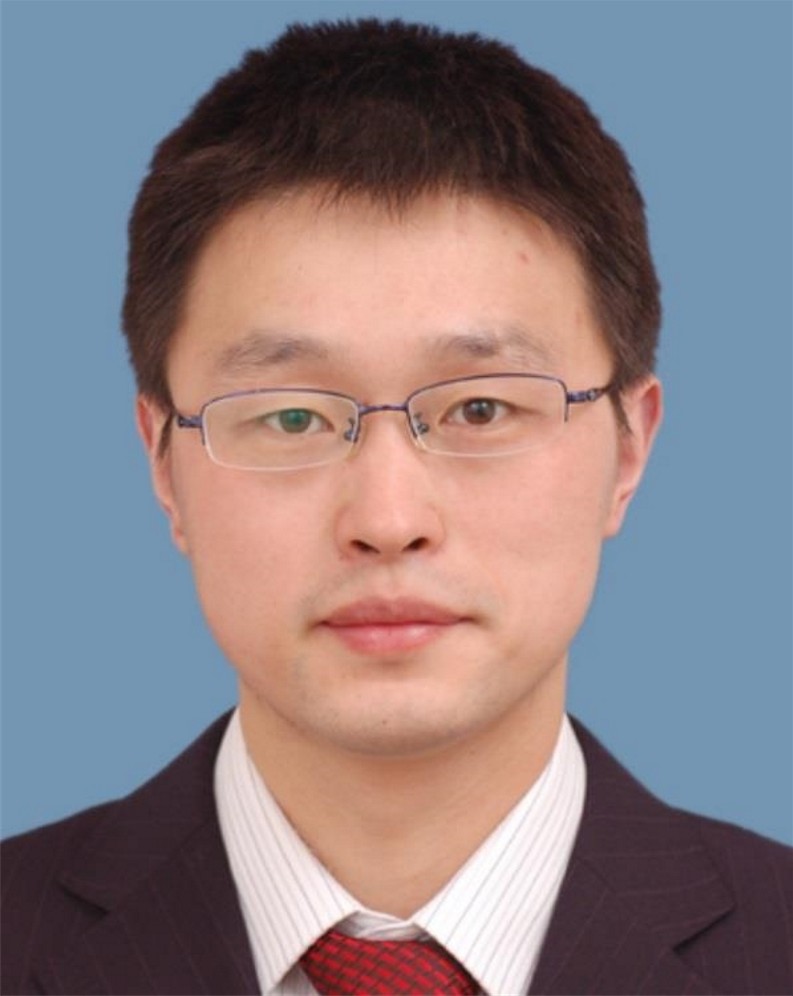}}]{Gongpu Wang}(Member, IEEE) received the B.Eng. degree in communication engineering from Anhui University, Hefei, Anhui, China, in 2001, the M.Sc.degree from the Beijing University of Posts and Telecommunications, Beijing, China, in 2004, and the Ph.D. degree from University of Alberta, Edmonton, Canada, in 2011. From 2004 to 2007, he was an Assistant Professor at the School of Network Education, Beijing University of Posts and Telecommunications. He is currently a Full Professor with the School of Computer and Information Technology, Beijing Jiaotong University, China. His research interests include wireless communication theory, signal processing technologies, and the Internet of Things.
\end{IEEEbiography}

\begin{IEEEbiography}[{\includegraphics[width=1in,height=1.25in,clip,keepaspectratio]{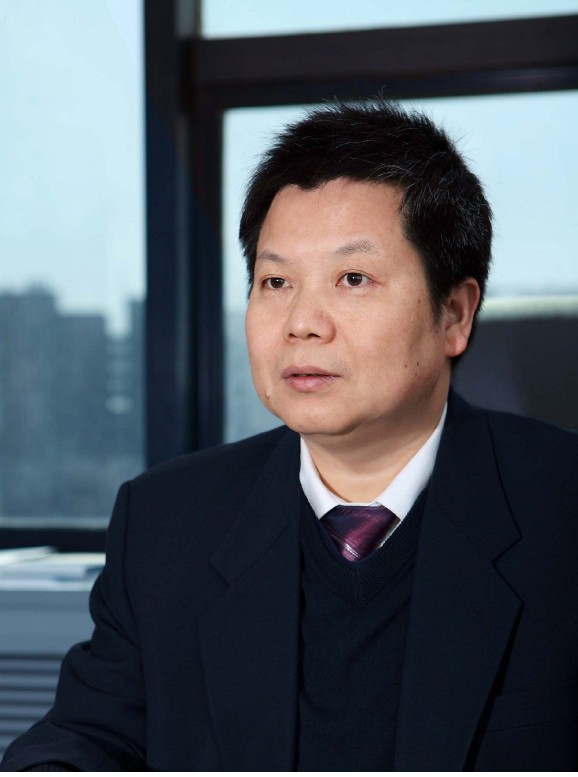}}]{Zhangdui Zhong}(Fellow, IEEE) received the B.E. and M.S. degrees from Beijing Jiaotong University, Beijing, China, in 1983 and 1988, respectively. He is a Professor and Advisor of Ph.D. candidates with Beijing Jiaotong University, Beijing, China. He is currently a Director of the School of Computer and Information Technology and a Chief Scientist of State Key Laboratory of Rail Traffic Control and Safety, Beijing Jiaotong University. He is also a Director of the Innovative Research Team of Ministry of Education, Beijing, and a Chief Scientist of Ministry of Railways, Beijing. He is an Executive Council Member of Radio Association of China, Beijing, and a Deputy Director of Radio Association, Beijing. His interests include wireless communications for railways, control theory and techniques for railways, and GSM-R systems. His research has been widely used in railway engineering, such as Qinghai-Xizang railway, Datong-Qinhuangdao Heavy Haul railway, and many high-speed railway lines in China.
  
He has authored or coauthored seven books, five invention patents, and over 200 scientific research papers in his research area. He received the MaoYiSheng Scientific Award of China, ZhanTianYou Railway Honorary Award of China, and Top 10 Science/Technology Achievements Award of Chinese Universities.
\end{IEEEbiography}

\end{document}